\newcommand{\keywordname}{Palavras-chaves}}
\newcommand{\keywordname}{Keywords}}
\newtheorem{proposition}{Proposition}
\newtheorem{lemma}{Lemma}
\newcommand{\printtitle}{%
\makeatletter
\if@twocolumn

\twocolumn[%
  \maketitle
  \begin{onecolabstract}
    \myabstract
  \end{onecolabstract}
  \begin{center}
    \small
    \textbf{\keywordname}
    \\\medskip
    \mykeywords
  \end{center}
  \bigskip
]
\saythanks
\else
  \maketitle
  \begin{onecolabstract}
    \myabstract
  \end{onecolabstract}
  \begin{center}
    \small
    \textbf{\keywordname}
    \\\medskip
    \mykeywords
  \end{center}
  \bigskip
  \onehalfspacing
\fi
\makeatother
}
\author{%
D.~F.~G.~Coelho%
\thanks{D.~F. G. Coelho is an independent researcher, Canada.
E-mail: \url{diegofgcoelho@gmail.com}%
}
\and
R.~J.~Cintra%
\thanks{%
R. J. Cintra is with the
Signal Processing Group,
Departamento de Estat\'{\i}stica,
Universidade Federal de Pernambuco, Brazil.
E-mail: \url{rjdsc@de.ufpe.br}%
}
\and
A.~Madanayake%
\thanks{%
A.~Madanayake is
with the
Department of Electrical and Computer Engineering,
Florida International University, FL.
E-mail: \url{amadanay@fiu.edu}
}
\and
S.~Perera%
\thanks{%
S.~M.~Perera is
with the
Department of Mathematics,
Embry-Riddle Aeronautical University, FL.
E-mail: \url{pereras2@erau.edu}
}
}
\title{%
Low-complexity Scaling Methods for \mbox{DCT-II} Approximations}
\newcommand{\myabstract}{%
This paper introduces
a collection of scaling methods
for generating $2N$-point \mbox{DCT-II} approximations
based on $N$-point low-complexity transformations.
Such scaling
is based on
the
Hou
recursive
matrix factorization
of the
exact
$2N$-point \mbox{DCT-II} matrix.
Encompassing the widely employed Jridi-Alfalou-Meher scaling method,
the proposed techniques
are shown to produce \mbox{DCT-II} approximations that
outperform the transforms resulting from
the
JAM scaling method
according to
total error energy and mean squared error.
Orthogonality conditions are derived
and an extensive
error analysis based on statistical simulation
demonstrates
the good performance of the introduced scaling methods.
A hardware implementation is also provided
demonstrating the competitiveness
of the proposed methods
when compared to the JAM scaling method.
}
\newcommand{\mykeywords}{%
DCT Approximation,
Fast algorithms,
Image compression
}
\date{}
\begin{document}

\printtitle

{A}{fter}
its inception in 1974~\cite{Ahmed1974},
the discrete cosine transform (DCT) has attracted
a significant amount of attention
leading to a large number of variations and algorithms
for its computation~\cite{Britanak2007}.
In particular,
the DCT of type II (\mbox{DCT-II}) is
widely employed for image
and video compression~\cite{Britanak2007}
codecs such as \mbox{H.262}/MPEG-2~\cite{IternationaTelecommunicationUnionITU2000},
\mbox{H.263}~\cite{IternationaTelecommunicationUnionITU2005},
\mbox{H.264}/AVC~\cite{Bhaskaran1997},
\mbox{H.265}/HEVC~\cite{Pourazad2012},
and the new \mbox{H.266}/VVC standard~\cite{Ochoa-Dominguez2019}.
The polynomial arithmetic technique---the use of the divide-and-conquer technique to reduce the degree of the
polynomial~\cite{Puschel2003, Puschel2008, Veoronenko2009, Steidl1991, Sandryhaila2011}---and
the matrix factorization technique---direct factorization
into the product of
sparse
matrices~\cite{Wang1984, Wang1985, Yip1980, Plonka2005, Perera2018a, Perera2018, Perera2015a, Perera2015, Perera2020}---are
the two main techniques that can be used to factor DCT matrices
and produce efficient hardware implementations.
Traditional video and image codecs
achieve
data decorrelation
by means of
blockwise image analysis
of $8\times 8$ subimages.
As a result,
considerable efforts have been put into
finding methods for
the 8-point \mbox{DCT-II}
computation~\cite{Haweel2001, Lengwehasatit2004,Bouguezel2008a,Cintra2011, Cintra2012, Potluri2012, Wahid2006, Liang2001}.
However,
the increasing demand for energy-efficient
hardware implementations moved the scientific community
towards the development of
low-complexity integer-based approximate transforms,
i.e.,
transforms
whose fast algorithms simply require trivial multiplications
(e.g.,~$\{0, \pm 1, \pm 1/2, \pm 2\}$)
and still provide good mathematical
properties~\cite{Haweel2001, Lengwehasatit2004,Bouguezel2008a,Cintra2011}.
In particular,
deriving good
low-complexity
approximations
has been an actively pursued
goal~\cite{Bouguezel2008, Bouguezel2009, Bouguezel2010, Bouguezel2011, Bouguezel2013, Bayer2011, Potluri2014}.
Such approximate approaches have led to
robust image and video compression systems
capable of significantly reducing
the computational cost of
the transform stage.
Although the 8-point \mbox{DCT-II} is ubiquitously employed,
modern codecs,
such as the
high efficiency video coding (HEVC),
require the \mbox{DCT-II} computation for larger blocklengths,
including
16- and~32-point transforms~\cite{Pourazad2012}.
Such demand
adds an extra difficulty for deriving good \mbox{DCT-II} approximations.

In its most general form,
designing
an $N$-point
approximate transform
requires
the solution of a non-linear multivariate
integer optimization problem
constrained to
the mathematical properties
of the desired
$N \times N$ matrix~\cite{Oliveira2019}.
Because it is often
an analytically intractable problem,
exhaustive search and brute force calculations
are
commonly
adopted as solution methods~\cite{Ehrgott2005, Oliveira2019}.
For small blocklengths, such as $N = 8$,
the exhaustive search
approach
is effective
and led to the proposition
of several low-complexity
approximations for the \mbox{DCT-II}~\cite{Coelho2018, Silveira2016}.
On the other hand,
as~$N$ increases,
the number of variables increases quadratically,
which
leads
to large computation times;
thus becoming an impractical approach.

To address such issue in designing~$N$-point \mbox{DCT-II} approximations,
matrix scaling has been considered as a feasible approach.
In the context of this paper,
the term
\emph{scaling}
refers to any procedure
capable of
deriving a larger transformation matrix
in terms of smaller transformations of the same kind.
In~\cite{Jridi2015},
Jridi, Alfalou, and Meher (JAM) proposed
a scaling method
for
deriving
$2N$-point \mbox{DCT-II} approximations
based on $N$-point \mbox{DCT-II} approximations.
Although JAM scaling method is a workable solution,
it consists of an \emph{ad-hoc} method
and
a clear analytical justification for its operation
is still lacking in the literature.
Moreover,
the JAM scaling
is an inexact procedure
in the sense that the scaling operation itself
introduces
errors
that
are
not due to the original $N$-point approximation.
Nevertheless,
due to the fact that
(i)~the \mbox{DCT-II} matrix is
highly symmetrical~\cite[p.~61, 70]{Britanak2007},
(ii)~the transform-based compression methods
are generally robust to matrix perturbations,
and
(iii)~image compression partially relies
on subjective aspects of the human visual system,
the approximations derived from the JAM scaling method
tended to provide practical transforms for image compression.
However,
we notice
that
there is a mathematical gap
in the understanding
of \mbox{DCT-II} scaling methods,
where more comprehensive
matrix analyses
are required.

The main goal of this paper is two-fold.
First we aim
at
providing a solid mathematical justification
to the
\mbox{DCT-II} approximation scaling proposed in ~\cite{Jridi2015}.
Second,
a collection of
scaling methods
capable of extending current approaches
for DCT approximations
which render low-complexity hardware implementations
is sought.
Direct matrix
factorizations~\cite{Wang1984, Wang1985, Yip1980, Plonka2005, Perera2018a, Perera2018, Perera2015a, Perera2015, Perera2020}
and
a
relation between
the \mbox{DCT-II} and the discrete sine (DST) transform of type IV (DST-IV)
are
employed to derive the sought methods.
As contributions, we also provide
16- and 32-point
\mbox{DCT-II} approximations
with
the associated
performance analysis compared to the scaled approximations obtained from the JAM method.

The paper is organized as follows.
Section~\ref{sec:background}
reviews
the DCT and DST definitions
and
relationships among them;
the JAM scaling method for \mbox{DCT-II} approximations
is also revisited.
Section~\ref{sec:scaling-dct-ii}
shows
the matrix derivation for obtaining
the
proposed recursive
algorithm to
compute the $2N$-point \mbox{DCT-II}.
Section~\ref{sec:scaling-methods}
introduces a family of scaling methods for
\mbox{DCT-II} approximations
in which
the JAM scaling method
is identified as a particular case.
Sufficient
mathematical conditions
for
the orthogonality of the scaled approximations
are also examined.
Section~\ref{sec:scaling-analysis}
presents error and coding performance analyses
as well as
an arithmetic complexity
evaluation of the proposed
family of scaling methods.
A suit of 16-point \mbox{DCT-II} approximations
resulting from the application of the proposed methods
to well-known 8-point \mbox{DCT-II} approximations
is also introduced and assessed.
Section~\ref{sec:conclusions}
brings final comments and future work~directions.

\section{Mathematical Background}
\label{sec:background}

\subsection{Discrete Cosine and Sine Transforms}

There are four main variants of DCT and DST,
which ranges from
types I to IV based on
Dirichlet and Neumann boundary conditions~\cite[p.~29--36]{Britanak2007}.
The entries
of the~$N$-point \mbox{DCT-II}, DCT-IV,
and DST-IV
transformation matrices
are, respectively, given by~\cite{Britanak2007}
\begin{align*}
\left[\mathbf{C}^\text{II}_N\right]_{k,n}
=
\sqrt{\frac{2}{N}}\beta_k
\cos\left( \frac{k(2n+1)\pi}{2N}\right)
,
\end{align*}
\begin{align*}
\left[\mathbf{C}^\text{IV}_N\right]_{k,n}
=
\sqrt{\frac{2}{N}}
\cos\left( \frac{(2k+1)(2n+1)\pi}{4N}\right)
,
\end{align*}
and
\begin{align*}
\left[\mathbf{S}^\text{IV}_N\right]_{k,n}
=
\sqrt{\frac{2}{N}}
\sin\left( \frac{(2k+1)(2n+1)\pi}{4N}\right),
\end{align*}
where~$k,n=0,1,\ldots,N-1$,
$\beta_0 = 1/\sqrt{2}$, and~$\beta_k = 1$,
for~\mbox{$k \neq 0$}.

\subsection{JAM Scaling for \mbox{DCT-II} Approximations}

Let
$\hat{\mathbf{C}}^\text{II}_{N}$
be an approximation for the
$N$-point
\mbox{DCT-II} matrix.
The JAM scaling method~\cite{Jridi2015}
generates
a $2N$-point \mbox{DCT-II} matrix approximation
$\hat{\mathbf{C}}^\text{II}_{2N}$
by using two instantiations of
a given $N$-point
\mbox{DCT-II} approximation matrix according to
\begin{align}
\label{equation-jam-scaling}
\hat{\mathbf{C}}^\text{II}_{2N} & =
\mathbf{P}_{2N}
\cdot
\begin{bmatrix}
\hat{\mathbf{C}}^\text{II}_{N}  &  \\
& \hat{\mathbf{C}}^\text{II}_{N} \end{bmatrix}
\cdot
\begin{bmatrix} \mathbf{I}_N  & \bar{\mathbf{I}}_N \\
\bar{\mathbf{I}}_N & -\mathbf{I}_N \end{bmatrix},
\end{align}
where
$\mathbf{I}_N$
and
$\bar{\mathbf{I}}_N$
are
the identity and counter-identity matrices of size~$N$,
respectively.
The matrix $\mathbf{P}_{2N}$
is
the
permutation
whose entries
are unitary
only at positions $(p_n, n)$,
where
\begin{align*}
p_n
=
\begin{cases}
2n, & n = 0, 1, \ldots, N - 1,
\\
2n (\operatorname{mod} 2N) + 1, & n = N, N+1, \ldots, 2N-1
.
\end{cases}
\end{align*}
The permutation~$\mathbf{P}_{2N}$
is
sometimes referred to
as the perfect shuffle~\cite[p.~66]{bracewell1985hartley},\cite{diaconis1983mathematics}
or the transpose of the even-odd permutation matrix~\cite{Blahut2010}.
In other word,
the JAM scaling method
is
the
mapping
described by:
\begin{align*}
\begin{split}
f_\text{JAM}:
\mathbb{R}^{N^2}
\longrightarrow
&
\mathbb{R}^{(2N)^2}
\\
\hat{\mathbf{C}}^\text{II}_{N}
\longmapsto
&
\hat{\mathbf{C}}^\text{II}_{2N}.
\end{split}
\end{align*}

The scaling expression \eqref{equation-jam-scaling}
stems
from
the following exact expression
based on an odd-even decomposition~\cite{Jridi2015}:
\begin{align}
\label{equation-original-exact-scaling}
\mathbf{C}_{2N}^\text{II}
& =
\frac{\sqrt{2}}{2}
\cdot
\mathbf{P}_{2N}
\cdot
\begin{bmatrix}
\mathbf{C}^\text{II}_N  &  \\
& \mathbf{J}_N \cdot \mathbf{S}^\text{IV}_N
\end{bmatrix}
\cdot
\begin{bmatrix}
\mathbf{I}_N  & \bar{\mathbf{I}}_N \\
\bar{\mathbf{I}}_N & -\mathbf{I}_N
\end{bmatrix}
,
\end{align}
where
~$
\mathbf{J}_N = \operatorname{diag} \left( \{(-1)^n\}_{n=0}^{N-1} \right)
$.
In order
to introduce~\eqref{equation-jam-scaling},
the authors
in~\cite{Jridi2015},
have
(i)~replaced
the
the lower-right block
$\mathbf{J}_N \cdot \mathbf{S}^\text{IV}_N$
matrix in~\eqref{equation-original-exact-scaling}
with
the $N$-point \mbox{DCT-II} matrix;
and
then
(ii)~substituted the exact \mbox{DCT-II} matrices with
approximate \mbox{DCT-II} matrices.

\subsection{Relationships between \mbox{DCT-II} and DCT-IV}
The
\mbox{DCT-II} computation
can be performed
by means of the
Chen algorithm~\cite{Wen-HsiengChen1977}.
Such algorithm
is based on a decomposition
that expresses~$\mathbf{C}^\text{II}_{2N}$
in terms of~$\mathbf{C}^\text{II}_N$
and~$\mathbf{C}^\text{IV}_N$~\cite{Britanak2013, Plonka2005, Perera2015, Perera2018, Puschel2008, Suehiro1986, Hsu2008}.
The matrix form of the Chen algorithm
is given
by~\cite[p.~96]{Britanak2007}%
\footnote{%
Equation 4.50 in~\cite[p.~96]{Britanak2007}
misses
the~${\sqrt{2}}/{2}$ factor.}
\begin{align*}
\mathbf{C}^\text{II}_{2N}
=
\frac{\sqrt{2}}{2}
\cdot
\mathbf{R}_{2N}
\cdot
\begin{bmatrix}
\mathbf{R}_N \cdot \mathbf{C}^\text{II}_N & \\
& \mathbf{R}_N \cdot \mathbf{C}^\text{IV}_N \cdot \bar{\mathbf{I}}_N \end{bmatrix}
\cdot
\begin{bmatrix}
\mathbf{I}_N  & \bar{\mathbf{I}}_N \\
\bar{\mathbf{I}}_N & -\mathbf{I}_N
\end{bmatrix}
,
\end{align*}
where~$\mathbf{R}_N$ is
the bit-reversal ordering
permutation matrix~\cite{Britanak2007, Blahut2010}.
The above expression can be simplified
by noticing that,
for any
integer~$N$,
the perfect shuffle
and
the bit-reversal permutations
satisfy the following expression:
\begin{align*}
\mathbf{P}_{2N}
=
\mathbf{R}_{2N} \cdot
\begin{bmatrix}
\mathbf{R}_N  &  \\
& \mathbf{R}_N
\end{bmatrix}
.
\end{align*}
Therefore,
we have
the matrix form
given below~\cite{Wang1984}:
\begin{align}
\label{equation-chen-factorization-simplified}
\mathbf{C}^\text{II}_{2N}
=
\frac{\sqrt{2}}{2}
\cdot
\mathbf{P}_{2N}
\cdot
\begin{bmatrix}
\mathbf{C}^\text{II}_N & \\
& \mathbf{C}^\text{IV}_N \cdot \bar{\mathbf{I}}_N
\end{bmatrix}
\cdot
\begin{bmatrix}
\mathbf{I}_N  & \bar{\mathbf{I}}_N \\
\bar{\mathbf{I}}_N & -\mathbf{I}_N
\end{bmatrix}
.
\end{align}

In~\cite{Chan1990} and~\cite{Kok1997},
it has been demonstrated
that
the matrices
$\mathbf{C}^\text{II}_N$ and~$\mathbf{C}^\text{IV}_N$ are
related according
to~\cite[p.~77]{Britanak2007}, \cite{Plonka2005, Puschel2008, Suehiro1986, Hsu2008}
\begin{align}
\label{eq:rel_A}
\mathbf{C}^\text{IV}_N =  \mathbf{A}_N \cdot \mathbf{C}^\text{II}_N \cdot \mathbf{D}_N,
\end{align}
where
\begin{align*}
\mathbf{D}_N
&
=
\operatorname{diag}
\left(
\left\{
2\cos\left( \frac{(2n+1)\pi}{4N} \right)
\right\}_{n=0}^{N-1}
\right)
,
\end{align*}
\begin{align}
\label{eq:A_N}
\mathbf{A}_N =
\mathbf{J}_N
\cdot
\operatorname{tril}
\left(
\mathbf{u}_N
\cdot
\begin{bmatrix}
\frac{\sqrt{2}}{2} & \mathbf{u}_{N-1}^\top
\end{bmatrix}
\right)
\cdot
\mathbf{J}_N
,
\end{align}
$\mathbf{u}_N$ is the $N$-point
column vector of ones
and
$\operatorname{tril}(\cdot)$
returns the lower triangular part
of its matrix argument
setting all other entries to zero~\cite{Shores2007}.
Hereafter,
let
$\mathbf{U}_N
=
\mathbf{u}_N
\cdot
\begin{bmatrix}
\frac{\sqrt{2}}{2} & \mathbf{u}_{N-1}^\top
\end{bmatrix}$.

\subsection{Recursive Computation of the \mbox{DCT-II}}
\label{sec:scaling-dct-ii}

The proposed
scaling method
relies on
finding
a suitable
relation between the~$2N$-point \mbox{DCT-II} and the $N$-point \mbox{DCT-II}.
We seek
a scaling expression
capable of
(i)~taking advantage of the \mbox{DCT-II}
regular
structures
and
(ii)~encompassing
the
JAM scaling method as a particular case.
Thus,
we aim at preserving the butterfly stage
characterized
in the rightmost matrices
in~\eqref{equation-original-exact-scaling}
and~\eqref{equation-chen-factorization-simplified}.
The following proposition due to Hou~\cite{HsiehHou1987},
to which we offer an alternative proof,
establishes the sought
relationship between the~$2N$-point \mbox{DCT-II} and the $N$-point \mbox{DCT-II}.

\begin{proposition}
\label{theo:exact-scaling}
The $2N$-point \mbox{DCT-II} matrix can be factored in the form
\begin{align}
\label{eq:scalable_general}
\mathbf{C}^\text{II}_{2N}
=
\frac{\sqrt{2}}{{2}}
\cdot
\mathbf{P}_{2N}
&
\cdot
\begin{bmatrix}
\mathbf{I}_N  &  \\
 & \mathbf{B}_N
\end{bmatrix}
\cdot
\begin{bmatrix}
\mathbf{C}^\text{II}_N & \\
& \mathbf{C}^\text{II}_N
\end{bmatrix}
\nonumber
\\
&
\cdot
\begin{bmatrix}
\mathbf{I}_N  &  \\
 & \mathbf{G}_N
\end{bmatrix}
\cdot
\begin{bmatrix}
\mathbf{I}_N  & \bar{\mathbf{I}}_N \\
\bar{\mathbf{I}}_N  & -\mathbf{I}_N
\end{bmatrix},
\end{align}
where
\begin{align*}
\mathbf{B}_N
=
-\bar{\mathbf{I}}_N \cdot
\operatorname{tril}
\left(\mathbf{U}_N
\right) \cdot \mathbf{J}_N
\end{align*}
and
\begin{align*}
\mathbf{G}_N
=
\operatorname{diag}
\left(
\left\{
2(-1)^{n}\cos\left( \frac{(2n+1)\pi}{4N}\right)
\right\}_{n = 0}^{N-1}
\right).
\end{align*}
\end{proposition}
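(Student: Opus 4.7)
The plan is to derive the factorization by starting from the Chen decomposition~\eqref{equation-chen-factorization-simplified} and rewriting only its lower-right block $\mathbf{C}^\text{IV}_N \cdot \bar{\mathbf{I}}_N$ as the triple product $\mathbf{B}_N \cdot \mathbf{C}^\text{II}_N \cdot \mathbf{G}_N$. The perfect shuffle $\mathbf{P}_{2N}$ on the left and the butterfly matrix on the right already coincide with those in the target expression~\eqref{eq:scalable_general}, so the entire structural work reduces to transforming a single $N \times N$ block; factoring the resulting block-diagonal matrix as a product of three block-diagonal matrices is then automatic.

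The central manipulation proceeds in three substeps. First, by a direct entrywise calculation using the addition formula $\cos((2k+1)\pi/2 - \theta) = (-1)^k \sin\theta$, I would establish the two symmetry identities
\[
\mathbf{C}^\text{IV}_N \cdot \bar{\mathbf{I}}_N = \mathbf{J}_N \cdot \mathbf{S}^\text{IV}_N,
\qquad
\bar{\mathbf{I}}_N \cdot \mathbf{C}^\text{IV}_N = \mathbf{S}^\text{IV}_N \cdot \mathbf{J}_N,
\]
which together yield $\mathbf{C}^\text{IV}_N \cdot \bar{\mathbf{I}}_N = \mathbf{J}_N \cdot \bar{\mathbf{I}}_N \cdot \mathbf{C}^\text{IV}_N \cdot \mathbf{J}_N$. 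Second, substituting the \mbox{DCT-IV} to \mbox{DCT-II} relation~\eqref{eq:rel_A} into the inner factor produces
\[
\mathbf{C}^\text{IV}_N \cdot \bar{\mathbf{I}}_N = \mathbf{J}_N \cdot \bar{\mathbf{I}}_N \cdot \mathbf{A}_N \cdot \mathbf{C}^\text{II}_N \cdot \mathbf{D}_N \cdot \mathbf{J}_N.
\]
Third, since $\mathbf{D}_N$ and $\mathbf{J}_N$ are both diagonal, their product commutes and equals $\mathbf{G}_N$ by inspection of diagonal entries, absorbing the rightmost $\mathbf{J}_N$.

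To identify the left prefactor with $\mathbf{B}_N$, I would expand $\mathbf{A}_N$ using~\eqref{eq:A_N} and apply the elementary identity $\mathbf{J}_N \cdot \bar{\mathbf{I}}_N \cdot \mathbf{J}_N = (-1)^{N-1} \bar{\mathbf{I}}_N$, obtained by observing that $[\mathbf{J}_N \cdot \bar{\mathbf{I}}_N \cdot \mathbf{J}_N]_{i,j} = (-1)^{i+j} \delta_{i, N-1-j}$ is supported on the anti-diagonal, where $i+j = N-1$. Consequently,
\[
\mathbf{J}_N \cdot \bar{\mathbf{I}}_N \cdot \mathbf{A}_N = \mathbf{J}_N \cdot \bar{\mathbf{I}}_N \cdot \mathbf{J}_N \cdot \operatorname{tril}(\mathbf{U}_N) \cdot \mathbf{J}_N = (-1)^{N-1} \bar{\mathbf{I}}_N \cdot \operatorname{tril}(\mathbf{U}_N) \cdot \mathbf{J}_N.
\]
For the dyadic blocklengths relevant to recursive scaling ($N$ a power of two, hence even), the sign equals $-1$ and the right-hand side is precisely the stated $\mathbf{B}_N$. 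The full factorization~\eqref{eq:scalable_general} then follows by splitting the block-diagonal matrix with blocks $\mathbf{C}^\text{II}_N$ and $\mathbf{B}_N \cdot \mathbf{C}^\text{II}_N \cdot \mathbf{G}_N$ as the triple product of $\operatorname{diag}(\mathbf{I}_N, \mathbf{B}_N)$, $\operatorname{diag}(\mathbf{C}^\text{II}_N, \mathbf{C}^\text{II}_N)$, and $\operatorname{diag}(\mathbf{I}_N, \mathbf{G}_N)$.

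The main obstacle I foresee is the careful tracking of sign conventions when commuting $\bar{\mathbf{I}}_N$ past $\mathbf{J}_N$, since these matrices do not commute and the parity $(-1)^{N-1}$ emerging from $\mathbf{J}_N \cdot \bar{\mathbf{I}}_N \cdot \mathbf{J}_N$ must reconcile with the explicit minus sign in the stated definition of $\mathbf{B}_N$. A minor notational caveat is that $\mathbf{U}_N$ is not explicitly introduced in the excerpt; I am reading it as the rank-one matrix $\mathbf{u}_N \cdot \bigl[\tfrac{\sqrt{2}}{2}\ \ \mathbf{u}_{N-1}^\top\bigr]$ already appearing inside $\operatorname{tril}(\cdot)$ in~\eqref{eq:A_N}, and the coherence of this reading with the rest of the paper is what makes the final algebraic identification close.
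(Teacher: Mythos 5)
Your proposal is correct and follows essentially the same route as the paper's proof: start from the Chen factorization~\eqref{equation-chen-factorization-simplified}, use the DST-IV/DCT-IV symmetry relations to write $\mathbf{C}^\text{IV}_N \cdot \bar{\mathbf{I}}_N = \mathbf{J}_N \cdot \bar{\mathbf{I}}_N \cdot \mathbf{C}^\text{IV}_N \cdot \mathbf{J}_N$, substitute~\eqref{eq:rel_A}, and absorb the diagonal factors into $\mathbf{G}_N$ while reducing $\mathbf{J}_N \cdot \bar{\mathbf{I}}_N \cdot \mathbf{A}_N$ to $-\bar{\mathbf{I}}_N \cdot \operatorname{tril}(\mathbf{U}_N) \cdot \mathbf{J}_N$. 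Your explicit tracking of the sign via $\mathbf{J}_N \cdot \bar{\mathbf{I}}_N \cdot \mathbf{J}_N = (-1)^{N-1}\bar{\mathbf{I}}_N$ is in fact slightly more careful than the paper, which invokes $\mathbf{J}_N \cdot \bar{\mathbf{I}}_N = -\bar{\mathbf{I}}_N \cdot \mathbf{J}_N$ without noting that this requires $N$ even (harmless here, since the scaling targets dyadic blocklengths), and your reading of $\mathbf{U}_N$ as the rank-one matrix inside $\operatorname{tril}(\cdot)$ in~\eqref{eq:A_N} is the intended one.
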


\begin{proof}
In~\cite{Wang1984},
Wang has demonstrated the following
relationship
between
the DST-IV
and
the DCT-IV~\cite{Britanak2007}:
\begin{align}
\label{eq:s-iv-relation-to-c-iv}
\mathbf{S}^\text{IV}_N
=
\bar{\mathbf{I}}_N
\cdot
\mathbf{C}^\text{IV}_N
\cdot
\mathbf{J}_N
.
\end{align}
The DST-IV and DCT-IV
are related according
to~\cite{Plonka2005}:
\begin{align}
\label{eq:DCTIV-counter-identity}
\mathbf{C}^\text{IV}_N
\cdot
\bar{\mathbf{I}}_N
=
\mathbf{J}_N\cdot
\mathbf{S}^\text{IV}_N
.
\end{align}
Combining the above expression with~\eqref{eq:s-iv-relation-to-c-iv},
we
obtain:
\begin{align*}
\mathbf{C}^\text{IV}_N\cdot
\bar{\mathbf{I}}_N
=
\mathbf{J}_N\cdot
\bar{\mathbf{I}}_N
\cdot
\mathbf{C}^\text{IV}_N
\cdot
\mathbf{J}_N.
\end{align*}
Replacing $\mathbf{C}^\text{IV}_N$ by~\eqref{eq:rel_A}
yields the expression below:
\begin{align*}
\mathbf{C}^\text{IV}_N
\cdot
\bar{\mathbf{I}}_N
=
\mathbf{J}_N
\cdot
\bar{\mathbf{I}}_N
\cdot
\mathbf{A}_N
\cdot
\mathbf{C}^\text{II}_N
\cdot
\mathbf{D}_N
\cdot
\mathbf{J}_N.
\end{align*}
Setting~$\mathbf{B}_N =\mathbf{J}_N \cdot \bar{\mathbf{I}}_N \cdot \mathbf{A}_N$ and~$\mathbf{G}_N
=\mathbf{D}_N
\cdot
\mathbf{J}_N
$,
we get
\begin{align}
\mathbf{C}^\text{IV}_N
\cdot
\bar{\mathbf{I}}_N
=
\mathbf{B}_N
\cdot
\mathbf{C}^\text{II}_N
\cdot
\mathbf{G}_N
.
\label{ne}
\end{align}
Here we applied
the result
$
\mathbf{J}_N
\cdot
\bar{\mathbf{I}}_N
=-\bar{\mathbf{I}}_N \cdot \mathbf{J}_N
$
and
then
we
combined
the entries of the diagonal matrices
$\mathbf{D}_N$ and $\mathbf{J}_N$
to obtain $\mathbf{G}_N$.
Also, we used~\eqref{eq:A_N}
to obtain that
$\mathbf{B}_N =\mathbf{J}_N \cdot \bar{\mathbf{I}}_N \cdot \mathbf{A}_N = -\bar{\mathbf{I}}_N \cdot \operatorname{tril} \left(\mathbf{U}_N \right) \cdot \mathbf{J}_N$.
Replacing
the
term
$\mathbf{C}^\text{IV}_N\cdot\bar{\mathbf{I}}_N$
in~\eqref{equation-chen-factorization-simplified}
by~\eqref{ne}
yields~\eqref{eq:scalable_general}.
\end{proof}

The structure of the matrix ~$\mathbf{B}_N$ is explicitly given by:
\begin{align*}
\mathbf{B}_N
&
=
-\bar{\mathbf{I}}_N \cdot
\operatorname{tril}
\left(\mathbf{U}_N
\right) \cdot \mathbf{J}_N
=
\left[
\begin{smallmatrix}
-\frac{\sqrt{2}}{2} & 1 & -1 & \ldots & 1 & -1 & \phantom{-}1\\
-\frac{\sqrt{2}}{2} & 1 & -1 & \ldots & 1 & -1 &  \\
-\frac{\sqrt{2}}{2} & 1 & -1 & \ldots & 1 &  &  \\
\vdots & \vdots & \vdots & \iddots &  & \\
\vdots & \vdots & \iddots & &  & \\
-\frac{\sqrt{2}}{2} & 1 &  &  &  &  & \\
-\frac{\sqrt{2}}{2} &  &   &  &  &  &
\end{smallmatrix}
\right]
.
\end{align*}

\section{Approximate Scaling}
\label{sec:scaling-methods}

\subsection{The Family of Scaling Methods}

In general,
a \mbox{DCT-II} approximation
$\hat{\mathbf{C}}^\text{II}_{N}$
can be represented according to
the polar decomposition~\cite[p.~348]{seber2008matrix}
consisting of two parts:
(i)~a low-complexity matrix~$\mathbf{T}_N$
and
(ii)~a diagonal matrix~$\bm{\Sigma}_N$
that provides
orthogonalization or
quasi-orthogonalization~\cite{Coelho2018, seber2008matrix}.
Such matrices are related
according
to:
\begin{align*}
\hat{\mathbf{C}}^\text{II}_{N}
=
\bm{\Sigma}_N
\cdot
\mathbf{T}_N
,
\end{align*}
where
\begin{align*}
\bm{\Sigma}_N
=
\sqrt{
\operatorname{diag}
\left\{
\left(
\mathbf{T}_N
\cdot
\mathbf{T}_N^\top
\right)^{-1}
\right\}
}
\end{align*}
and
$\sqrt{\cdot}$
is the matrix square root~\cite{seber2008matrix}.
Here the operator $\operatorname{diag}(\cdot)$
returns a diagonal matrix
with the elements of the diagonal
of
its matrix argument~\cite{seber2008matrix, Coelho2018}.

Considering
the Proposition~\ref{theo:exact-scaling},
we introduce the
following
mapping
relating
an $N$-point
low-complexity matrix~$\mathbf{T}_N$
to
its $2N$-point scaled form~$\mathbf{T}_{2N}$
as shown below:
\begin{align}
\label{eq:scalable_general_app}
\begin{split}
f_{\left(\hat{\mathbf{B}}_N, \hat{\mathbf{G}}_N\right)}:
\mathbb{R}^{N^2}
\longrightarrow
&
\mathbb{R}^{(2N)^2}
\\
\mathbf{T}_N
\longmapsto
&
\mathbf{T}_{2N}
=
\mathbf{P}_{2N}
\cdot
\begin{bmatrix}
\mathbf{I}_N & \\
&   \hat{\mathbf{B}}_N
\end{bmatrix}
\cdot
\begin{bmatrix}
\mathbf{T}_N & \\
&   \mathbf{T}_N
\end{bmatrix}
\\
&
\phantom{\mathbf{T}_{2N}=}
\cdot
\begin{bmatrix}
\mathbf{I}_N & \\
&   \hat{\mathbf{G}}_N
\end{bmatrix}
\cdot
\begin{bmatrix}
\mathbf{I}_N  & \bar{\mathbf{I}}_N \\
\bar{\mathbf{I}}_N & -\mathbf{I}_N
\end{bmatrix}.
\end{split}
\end{align}
Matrices~$\hat{\mathbf{B}}_N$
and~$\hat{\mathbf{G}}_N$
are parameter matrices.
Approximations
for~$\mathbf{B}_N$
and~$\mathbf{G}_N$
appear
as
natural candidates for the parameter matrices
$\hat{\mathbf{B}}_N$
and~$\hat{\mathbf{G}}_N$,
respectively.
The actual $2N$-point approximate \mbox{DCT-II}
is obtained after orthogonalization~\cite{Tablada2015, Coelho2018}
and
is given by:
\begin{align}
\label{eq:orthogonalization}
\hat{\mathbf{C}}^\text{II}_{2N}
=
\bm{\Sigma}_{2N}
\cdot
\mathbf{T}_{2N}
,
\end{align}
where
$
\bm{\Sigma}_{2N}
=
\sqrt{
\operatorname{diag}
\left\{
\left(
\mathbf{T}_{2N}
\cdot
\mathbf{T}_{2N}^\top
\right)^{-1}
\right\}
}$.
Although
\eqref{eq:scalable_general_app}
stems from
\eqref{eq:scalable_general}
(Proposition~\ref{theo:exact-scaling}),
it does not need to inherit
the scalar~$\sqrt{2}/2$.
This is because
the orthogonalization
in \eqref{eq:orthogonalization}
is invariant
to the presence of constant scalars~\cite{Tablada2015}.
Ultimately,
depending on
the choice
of the parameter
matrices~$\hat{\mathbf{B}}_N$ and~$\hat{\mathbf{G}}_N$,
we obtain different methods for scaling \mbox{DCT-II} approximations.

Selected choices of
parameter matrices~$\hat{\mathbf{B}}_N$ and~$\hat{\mathbf{G}}_N$
are
shown
in Table~\ref{tab:scaling_families},
where~$\mathbf{Z}_N = \operatorname{diag}(\begin{bmatrix}1/2&\mathbf{u}_{N-1}\end{bmatrix})$.
The list is not complete as~\emph{any choice}
of~$\hat{\mathbf{B}}_N$ and/or~$\hat{\mathbf{G}}_N$
that are regarded to be
`close enough' to~$\mathbf{B}_N$ and~$\mathbf{G}_N$, respectively,
generates a particular scaling method.
For instance,
the JAM scaling method proposed in~\cite{Jridi2015}
is
the particular case when
$\hat{\mathbf{B}}_N = \hat{\mathbf{G}}_N = \mathbf{I}_N$,
furnishing the relationship below:
\begin{align*}
\hat{\mathbf{C}}^\text{II}_{2N}
=
f_\text{JAM}
(\hat{\mathbf{C}}^\text{II}_{N})
=
f_{(\mathbf{I}_N,\mathbf{I}_N)}
(\hat{\mathbf{C}}^\text{II}_{N})
.
\end{align*}
Notice that the coefficients of the matrices were intentionally chosen to be small magnitude integers
which are in the set~$\{0, \pm 1/2, \pm 1, \pm 2\}$.
The faithfulness of the~$2N$-point DCT approximation will come as a result of
how the entries of the parameter matrices~$\hat{\mathbf{B}}_N$ and~$\hat{\mathbf{G}}_N$
are chosen.
Other choices of parameter matrices could be obtained by bit-expanding
the original matrices~${\mathbf{B}}_N$ and~${\mathbf{G}}_N$
or performing multicriteria optimization over the coefficients
of~$\hat{\mathbf{B}}_N$ and~$\hat{\mathbf{G}}_N$~\cite{Coelho2018, Tablada2015} and
taking into account the specifics of the application in hand.

\begin{table}
\centering
\small
\caption{Scaling Approximations and Its Performance}
\label{tab:scaling_families}
\begin{tabular}{c@{\quad} c@{\quad} c@{\quad} c@{\quad} c@{\quad} c@{\quad} c@{\quad} c@{\quad} c@{\quad}}\toprule
\multirow{2}{*}{Case} &\multirow{2}{*}{$\hat{\mathbf{B}}_N$} &  \multirow{2}{*}{$\hat{\mathbf{G}}_N$} & \multicolumn{3}{c}{$\left| \hat{\mathbf{C}}^\text{II}_{2N}-\mathbf{C}^\text{II}_{2N} \right|_\mathsf{F}$} & & \multirow{2}{*}{\small{Orth.?}} \\\cmidrule{4-6}
& & & $N=8$ & $N=16$ & $N=32$ & & \\\midrule
JAM & $\mathbf{I}_N$ & $\mathbf{I}_N$ & $3.994$ & $5.653$ & $7.997$ & & Yes\\\midrule

I & $\bar{\mathbf{I}}_N$ & $\mathbf{I}_N$ & $3.826$ & $5.533$ & $7.912$ & & Yes\\\midrule

II & $-\bar{\mathbf{I}}_N\cdot\mathbf{J}_N$ & $\mathbf{I}_N$ & $4.001$ & $5.657$ & $8.000$ & & Yes\\\midrule

III & $-\bar{\mathbf{I}}_N\cdot\mathbf{Z}_N\cdot\mathbf{J}_N $ & $\mathbf{I}_N$ & $4.001$ & $5.657$ & $8.000$ & & Yes\\\midrule

IV & $\mathbf{I}_N$ & $\mathbf{J}_N$ & $3.826$ & $5.533$ & $7.912$ & & Yes \\\midrule

V & $\bar{\mathbf{I}}_N$ & $\mathbf{J}_N$ & $4.006$ & $5.661$ & $8.003$ & & Yes\\\midrule

VI & $-\bar{\mathbf{I}}_N\cdot\mathbf{J}_N$ & $\mathbf{J}_N$ & $1.954$ & $3.033$ & $4.515$ & & Yes\\\midrule

VII & $-\bar{\mathbf{I}}_N\cdot\mathbf{Z}_N\cdot\mathbf{J}_N $ & $\mathbf{J}_N$ & $1.954$ & $3.033$ & $4.515$ & & Yes \\\midrule

\end{tabular}
\end{table}

\subsection{Orthogonality Condition}

The design
of approximate transforms
often
require
the transformation matrices to be orthogonal~\cite{Britanak2007, Tablada2015, Coelho2018}.
In fact,
in contexts
such as
noise reduction~\cite{Gupta2012},
watermarking methods~\cite{An2009},
and harmonic detection~\cite{Limin2007,Zheng2010, Britanak2007},
the invertibility of the DCT is not only desired, but required.
This is because the DCT is
used to translate the signal to its transform domain,
where processing is performed.
The resulting
signal in the transform domain is then translated back into the time domain,
rendering the final desired output.
Thus, we show in Proposition \ref{proposition-orthogonality}, the sufficient conditions but not the necessary conditions for orthogonally of the proposed DCT-II approximation.

It can be shown~\cite{Tablada2015}
that
if
$\mathbf{T}_{2N}\cdot\mathbf{T}_{2N}^\top$ is a diagonal matrix,
then
the
\mbox{DCT-II} approximation~$\hat{\mathbf{C}}^\text{II}_{2N}$
obtained according to~\eqref{eq:orthogonalization}
is
orthogonal.
Using~\eqref{eq:scalable_general_app},
we can write
\begin{equation}
\begin{split}
\mathbf{T}_{2N}\cdot \mathbf{T}_{2N}^\top
=
&
2
\cdot
\mathbf{P}_{2N}
\\
&
\cdot
\begin{bmatrix}
\mathbf{T}_N\cdot\mathbf{T}_N^\top & \\
&
\hat{\mathbf{B}}_N
\cdot
\mathbf{T}_N
\cdot
\hat{\mathbf{G}}_N
\cdot
\hat{\mathbf{G}}_N^\top
\cdot
\mathbf{T}_N^\top
\cdot
\hat{\mathbf{B}}_N^\top
\end{bmatrix}
\\
&
\cdot
\mathbf{P}_{2N}
.
\end{split}
\label{eq:T2N-times-T2N-transpose}
\end{equation}
Thus,
for
$\mathbf{T}_{2N}\cdot \mathbf{T}_{2N}^\top$
to be a diagonal matrix,
we have
to ensure that
both matrices
$\mathbf{T}_N\cdot\mathbf{T}_N^\top$
and
$
\hat{\mathbf{B}}_N
\cdot
\mathbf{T}_N
\cdot
\hat{\mathbf{G}}_N
\cdot
\hat{\mathbf{G}}_N^\top
\cdot
\mathbf{T}_N^\top
\cdot
\hat{\mathbf{B}}_N^\top$
are diagonal matrices.
In order to investigate the conditions for orthogonality,
we use the following result from~\cite[p.~151]{seber2008matrix}.
\begin{lemma}
If~$\mathbf{P}$ is a permutation matrix
and~$\mathbf{D}$ is a diagonal matrix,
then~$\mathbf{P} \cdot \mathbf{D} \cdot \mathbf{P}^\top$
is a diagonal matrix.
\label{theo:PDP}
\end{lemma}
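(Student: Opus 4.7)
The plan is to proceed by direct computation of the entries of $\mathbf{P}\cdot\mathbf{D}\cdot\mathbf{P}^\top$, exploiting the combinatorial structure of permutation matrices. First, I would encode $\mathbf{P}$ via the underlying permutation $\pi$ of $\{1,2,\ldots,N\}$, writing $[\mathbf{P}]_{i,j}=1$ iff $j=\pi(i)$ and $0$ otherwise; this immediately gives $[\mathbf{P}^\top]_{j,i}=[\mathbf{P}]_{i,j}$. Denote the diagonal entries of $\mathbf{D}$ by $d_1,\ldots,d_N$, so that $[\mathbf{D}]_{j,k}=d_j\,\delta_{j,k}$.

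Next, I would multiply out the triple product entrywise. Because $\mathbf{D}$ is diagonal, the intermediate product collapses to $[\mathbf{P}\cdot\mathbf{D}]_{i,k}=[\mathbf{P}]_{i,k}\,d_k$. A second application yields
\begin{align*}
[\mathbf{P}\cdot\mathbf{D}\cdot\mathbf{P}^\top]_{i,\ell}
=\sum_{k=1}^{N}[\mathbf{P}]_{i,k}\,d_k\,[\mathbf{P}]_{\ell,k}.
\end{align*}
Each factor $[\mathbf{P}]_{i,k}$ is nonzero only for $k=\pi(i)$, and each factor $[\mathbf{P}]_{\ell,k}$ only for $k=\pi(\ell)$. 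Hence the summand is nonzero precisely when $\pi(i)=\pi(\ell)$, which, since $\pi$ is a bijection, forces $i=\ell$.

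Therefore $[\mathbf{P}\cdot\mathbf{D}\cdot\mathbf{P}^\top]_{i,\ell}=d_{\pi(i)}\,\delta_{i,\ell}$, showing the product is diagonal (in fact, with the diagonal of $\mathbf{D}$ permuted by $\pi$). There is no real obstacle here; the only subtle point is invoking the bijectivity of $\pi$ to rule out off-diagonal contributions. An equivalent one-line argument is to note that permutation matrices are orthogonal, so $\mathbf{P}\cdot\mathbf{D}\cdot\mathbf{P}^\top=\mathbf{P}\cdot\mathbf{D}\cdot\mathbf{P}^{-1}$ is a similarity transformation that simultaneously relabels rows and columns by $\pi$, manifestly preserving the diagonal sparsity pattern.
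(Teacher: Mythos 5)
Your proof is correct. Note that the paper itself does not prove this lemma at all: it simply imports it as a known result, citing page~151 of Seber's matrix handbook, so there is no in-paper argument to compare against. Your entrywise computation is a perfectly sound self-contained justification: encoding $\mathbf{P}$ by the permutation $\pi$, collapsing the sum $[\mathbf{P}\cdot\mathbf{D}\cdot\mathbf{P}^\top]_{i,\ell}=\sum_k [\mathbf{P}]_{i,k}\,d_k\,[\mathbf{P}]_{\ell,k}$, and invoking injectivity of $\pi$ to kill the off-diagonal entries is exactly the right (and essentially the only) argument. Your closing observation that the conjugation merely permutes the diagonal of $\mathbf{D}$, i.e.\ $\mathbf{P}\cdot\mathbf{D}\cdot\mathbf{P}^\top=\mathbf{P}\cdot\mathbf{D}\cdot\mathbf{P}^{-1}$ is a relabelling similarity, is the cleaner high-level way to see it and is also what makes the lemma useful later in Proposition~2, where the resulting matrix must again be diagonal so that the subsequent conjugation by the diagonal factor $\mathbf{D}$ of the generalized permutation $\hat{\mathbf{B}}_N$ preserves diagonality.
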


Therefore,
sufficient conditions
for orthogonality
are furnished by the proposition below.

\begin{proposition}
\label{proposition-orthogonality}
If the following conditions are satisfied:
\begin{enumerate}[(i)]
\item
$\mathbf{T}_N\cdot\mathbf{T}_N^\top$ is a diagonal matrix;
\item
$\hat{\mathbf{G}}_N\cdot\hat{\mathbf{G}}_N^\top
=
a\cdot\mathbf{I}_N$,
$a\in\mathbb{R}$;
\item
$\hat{\mathbf{B}}_N$ is
a generalized permutation matrix;
\end{enumerate}
then
the scaling method in~\eqref{eq:scalable_general_app}
generates an orthogonal \mbox{DCT-II} approximation.

\begin{proof}
We need to ensure that the sub-matrices
from the block-diagonal matrix in~\eqref{eq:T2N-times-T2N-transpose}
are diagonal matrices.
Therefore,
the Condition~(i) is
clearly a necessary condition.
Now let us examine the lower-right sub-matrix
$
\hat{\mathbf{B}}_N
\cdot
\mathbf{T}_N
\cdot
\hat{\mathbf{G}}_N
\cdot
\hat{\mathbf{G}}_N^\top
\cdot
\mathbf{T}_N^\top
\cdot
\hat{\mathbf{B}}_N^\top
$.
Using~\eqref{eq:T2N-times-T2N-transpose} and
Condition~(ii), we obtain:
\begin{align*}
\hat{\mathbf{B}}_N
\cdot
\mathbf{T}_N
\cdot
\hat{\mathbf{G}}_N
\cdot
\hat{\mathbf{G}}_N^\top
\cdot
\mathbf{T}_N^\top
\cdot
\hat{\mathbf{B}}_N^\top
=
a
\cdot
\hat{\mathbf{B}}_N
\cdot
\mathbf{T}_N
\cdot
\mathbf{T}_N^\top
\cdot
\hat{\mathbf{B}}_N^\top
.
\end{align*}
Condition~(iii) ensures that
$
\hat{\mathbf{B}}_N
=
\mathbf{D}
\cdot
\mathbf{P}
$,
where~$\mathbf{D}$ is a diagonal matrix
and~$\mathbf{P}$ is a permutation matrix.
Thus,
we have that:
$
\hat{\mathbf{B}}_N
\cdot
\mathbf{T}_N
\cdot
\mathbf{T}_N^\top
\cdot
\hat{\mathbf{B}}_N^\top
=
\mathbf{D}
\cdot
\mathbf{P}
\cdot
\mathbf{T}_N
\cdot
\mathbf{T}_N^\top
\cdot
\mathbf{P}^\top
\cdot
\mathbf{D}
$.
By applying Lemma~\ref{theo:PDP},
it follows that
$
\mathbf{P}
\cdot
\mathbf{T}_N
\cdot
\mathbf{T}_N^\top
\cdot
\mathbf{P}^\top
$
is a diagonal matrix.
We have then that
$
\hat{\mathbf{B}}_N
\cdot
\mathbf{T}_N
\cdot
\mathbf{T}_N^\top
\cdot
\hat{\mathbf{B}}_N^\top
=
\mathbf{D}
\cdot
\left(
\mathbf{P}
\cdot
\mathbf{T}_N
\cdot
\mathbf{T}_N^\top
\cdot
\mathbf{P}^\top
\right)
\cdot
\mathbf{D}$
is also a diagonal matrix.
\end{proof}
\end{proposition}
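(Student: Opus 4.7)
The plan is to show that $\mathbf{T}_{2N} \cdot \mathbf{T}_{2N}^\top$ is a diagonal matrix, which by the cited result from~\cite{Tablada2015} suffices to guarantee that the resulting $\hat{\mathbf{C}}^\text{II}_{2N}$ is orthogonal. Starting from the factorization in~\eqref{eq:T2N-times-T2N-transpose}, I would observe that the overall structure is a conjugation of a block-diagonal matrix by the permutation $\mathbf{P}_{2N}$. Thus, by Lemma~\ref{theo:PDP}, diagonality of $\mathbf{T}_{2N} \cdot \mathbf{T}_{2N}^\top$ reduces to showing that each of the two diagonal blocks is itself diagonal.

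The top-left block is $\mathbf{T}_N \cdot \mathbf{T}_N^\top$, which is diagonal directly by Condition~(i). The substantive task is the bottom-right block, $\hat{\mathbf{B}}_N \cdot \mathbf{T}_N \cdot \hat{\mathbf{G}}_N \cdot \hat{\mathbf{G}}_N^\top \cdot \mathbf{T}_N^\top \cdot \hat{\mathbf{B}}_N^\top$. I would first apply Condition~(ii) to collapse the inner factor $\hat{\mathbf{G}}_N \cdot \hat{\mathbf{G}}_N^\top$ to $a \cdot \mathbf{I}_N$, reducing the expression to $a \cdot \hat{\mathbf{B}}_N \cdot (\mathbf{T}_N \cdot \mathbf{T}_N^\top) \cdot \hat{\mathbf{B}}_N^\top$.

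Next, I would invoke Condition~(iii) to write $\hat{\mathbf{B}}_N = \mathbf{D} \cdot \mathbf{P}$, where $\mathbf{D}$ is diagonal and $\mathbf{P}$ is a permutation, obtaining $a \cdot \mathbf{D} \cdot \mathbf{P} \cdot (\mathbf{T}_N \cdot \mathbf{T}_N^\top) \cdot \mathbf{P}^\top \cdot \mathbf{D}^\top$. Since $\mathbf{T}_N \cdot \mathbf{T}_N^\top$ is already diagonal by Condition~(i), a second application of Lemma~\ref{theo:PDP} yields that $\mathbf{P} \cdot (\mathbf{T}_N \cdot \mathbf{T}_N^\top) \cdot \mathbf{P}^\top$ is diagonal; pre- and post-multiplication by the diagonal factor $\mathbf{D}$ preserves this property. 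Hence the bottom-right block is diagonal, the block-diagonal matrix in~\eqref{eq:T2N-times-T2N-transpose} is diagonal, and one more appeal to Lemma~\ref{theo:PDP} with the outer permutation $\mathbf{P}_{2N}$ completes the argument.

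I do not anticipate a genuine obstacle here: the whole argument amounts to two nested applications of Lemma~\ref{theo:PDP} separated by the scalar simplification afforded by Condition~(ii). The one point requiring care is the bookkeeping of transposes when decomposing $\hat{\mathbf{B}}_N$ and its transpose, so that the outer diagonal factors remain diagonal and do not disturb the final diagonality. Since the proposition asserts sufficient but not necessary conditions, there is no need to seek tightness, which keeps the argument self-contained.
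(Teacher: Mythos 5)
Your proposal is correct and follows essentially the same route as the paper's proof: Condition (i) handles the top-left block, Condition (ii) collapses $\hat{\mathbf{G}}_N\cdot\hat{\mathbf{G}}_N^\top$ to $a\cdot\mathbf{I}_N$, and Condition (iii) together with Lemma~\ref{theo:PDP} handles the bottom-right block. Your extra explicit appeal to the lemma for the outer conjugation by $\mathbf{P}_{2N}$ (and writing $\mathbf{D}^\top$ rather than $\mathbf{D}$) only makes the bookkeeping slightly more careful than the paper's version.
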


Notice that the conditions
required by
Proposition~\ref{proposition-orthogonality}
are not too restrictive.
In fact,
because
the exact matrix~$\mathbf{G}_N$ is
by definition
a diagonal matrix,
the condition on~$\hat{\mathbf{G}}_N$
(Condition (ii))
can be met by approximating the elements
of
$\mathbf{G}_N$
to
a suitable
value
$\pm\sqrt{a}$
(e.g., $a=1$).
The methods listed in Table~\ref{tab:scaling_families}
are capable of
generating orthogonal approximations,
because
the selected choices for
$\hat{\mathbf{B}}_N$ and~$\hat{\mathbf{G}}_N$
are under
the conditions
prescribed in Proposition~\ref{proposition-orthogonality}.

\section{Error, Performance, and Complexity Analysis}
\label{sec:scaling-analysis}

\subsection{Error Analysis and Statistical Modeling}

In order to assess the proposed scaling methods,
we performed
an error analysis
based on
the Frobenius norm of the
difference~\mbox{$\hat{\mathbf{C}}^\text{II}_{2N}-\mathbf{C}^\text{II}_{2N}$}.
To properly
isolate
the behavior
of the scaling method,
the required
$N$-point matrices were ensured to be identical
and equal to the exact DCT matrix,
i.e.,
$\hat{\mathbf{C}}^\text{II}_N = \mathbf{C}^\text{II}_N$.
Table~\ref{tab:scaling_families}
shows
the computed errors
for $N \in \{ 8, 16, 32 \}$.
In all cases,
Methods I, IV, VI, and~VII
generated smaller errors,
outperforming
the JAM scaling method.

Now let us analyze
the errors when
actual approximations
$\hat{\mathbf{C}}^\text{II}_{N}$
are considered,
i.e.
$
\hat{\mathbf{C}}^\text{II}_N
\not=
\mathbf{C}^\text{II}_N
$.
The performance
of the proposed scaling methods
can be quantified
by means
of
the error of~$\hat{\mathbf{C}}^\text{II}_{2N}$,
relative to~$\mathbf{C}^\text{II}_{N}$,
as a function of the error of~$\hat{\mathbf{C}}^\text{II}_{N}$,
relative to~$\mathbf{C}^\text{II}_{N}$.
Because of the wide popularity and importance of
the 8-point \mbox{DCT-II},
we fixed $N=8$
as the most relevant case for analysis.
Thus,
the values of~$\|\hat{\mathbf{C}}^\text{II}_{8}-\mathbf{C}^\text{II}_{8}\|_{\mathsf{F}}$
were computed for
the 8-point \mbox{DCT-II} approximations
discussed below.
The recent book~\cite{Ochoa-Dominguez2019} by Rao,
co-inventor of the DCT,
identifies state-of-art transforms
such as
the series of approximations
by Bouguezel-Ahmad-Swamy (BAS)~\cite[p.~160]{Ochoa-Dominguez2019},
the rounded DCT (RDCT)~\cite[p.~162]{Ochoa-Dominguez2019},
and
the
modified RDCT (MRDCT)~\cite[p.~162]{Ochoa-Dominguez2019}.
Due to its flexibility,
we selected the BAS parametric approximation described
in~\cite{Bouguezel2011}
for
parameter values $a=0, 1/2, 1$,
referred to as
$\text{BAS}_1$,
$\text{BAS}_2$,
and
$\text{BAS}_3$.
We also included the BAS approximation~\cite{Bouguezel2013}
labeled here as $\text{BAS}_4$.
In~\cite{Lombardi2018},
the RDCT
and
the MRDCT
were
identified as optimal approximations
in terms of output image quality and computing time, respectively,
according to a hardware implementation using approximate adder cells
for
image compression.
In addition to the above-mentioned approximations,
we included
in our comparisons
the very recently introduced
angle-based approximate DCT in~\cite{Oliveira2019}
(here termed ABDCT),
the traditional signed DCT (SDCT)~\cite{Haweel2001},
the Lengwehasatit-Ortega DCT
approximation (LODCT)~\cite{Lengwehasatit2004},
and
the low-complexity approximation detailed in~\cite{Potluri2014},
which is an improved version of the MRDCT,
here denoted as IMRDCT.

The error for the corresponding scaled approximations
$\|\hat{\mathbf{C}}^\text{II}_{16}-\mathbf{C}^\text{II}_{16}\|_{\mathsf{F}}$
against
$\|\hat{\mathbf{C}}^\text{II}_{8}-\mathbf{C}^\text{II}_{8}\|_{\mathsf{F}}$
for each of the methods in Table~\ref{tab:scaling_families}
are shown in Figure~\ref{fig:C2NError}.
The errors follow a linear trend
that can be quantified
according
to
a linear regression
using
least-square estimation~\cite{Kay1993, Kay1998}
for
the linear model below:
\begin{align*}
g(\|\hat{\mathbf{C}}^\text{II}_{16}-\mathbf{C}^\text{II}_{16}\|_{\mathsf{F}})
=
m\cdot \|\hat{\mathbf{C}}^\text{II}_{8}-\mathbf{C}^\text{II}_{8}\|_{\mathsf{F}}+b
,
\end{align*}
where
$m$ and $b$ are
the slope and intercept to be estimated,
respectively.
Table~\ref{tab:linear-regression} shows
the estimates of~$m$ and~$b$, $\hat{m}$ and~$\hat{b}$,
along with
the~$\chi^2$ goodness of fit statistic
and
the residual mean squared error (RMSE)
for the model.
At the significance level of~$0.001$,
the critical value
was
approximately~$20.1$
for all scenarios.
The
regression models presented
$\chi^2$ test statistic values
smaller  than~$\approx 7.9\cdot 10^{-2}$ %
and
RMSE values
smaller  than~$\approx 1.0\cdot 10^{-1}$.
Such values
are much smaller than the critical value
for the test at~$0.001$ significance level;
thus
we have $p$-values very close to the unit~\cite{casellaberger},
preventing us from rejecting the models.
\begin{figure*}
\centering

\psfrag{norm(CNh-CN)}[][][0.8]{$\|\hat{\mathbf{C}}^\text{II}_{8}-\mathbf{C}^\text{II}_{8}\|_{\mathsf{F}}$}
\psfrag{norm(C2Nh-C2N)}[][][0.8]{$\|\hat{\mathbf{C}}^\text{II}_{16}-\mathbf{C}^\text{II}_{16}\|_{\mathsf{F}}$}

\subfigure[Method~JAM]{\includegraphics[scale=0.90]{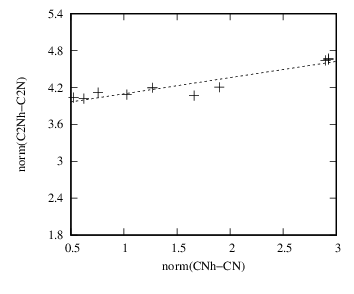}}
\subfigure[Method~I]{\includegraphics[scale=0.90]{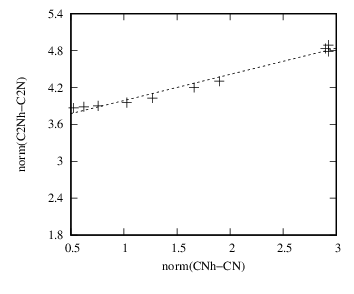}}
\subfigure[Method~II]{\includegraphics[scale=0.90]{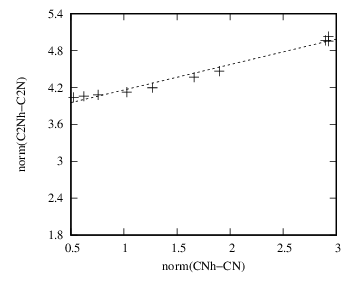}}
\subfigure[Method~III]{\includegraphics[scale=0.90]{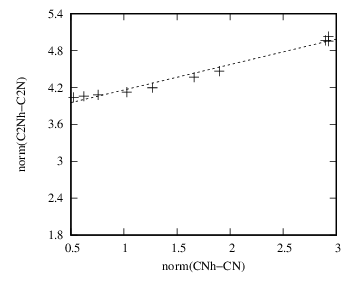}}
\subfigure[Method~IV]{\includegraphics[scale=0.90]{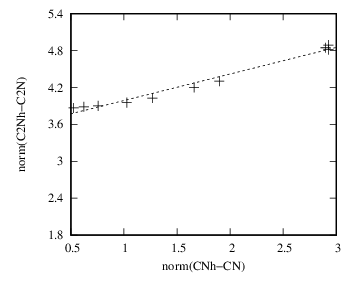}}
\subfigure[Method~V]{\includegraphics[scale=0.90]{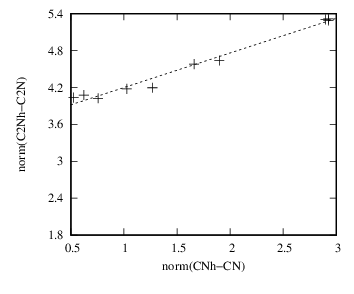}}
\subfigure[Method~VI]{\includegraphics[scale=0.90]{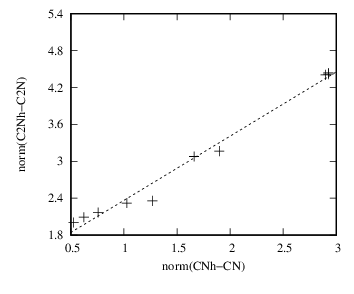}}
\subfigure[Method~VII]{\includegraphics[scale=0.90]{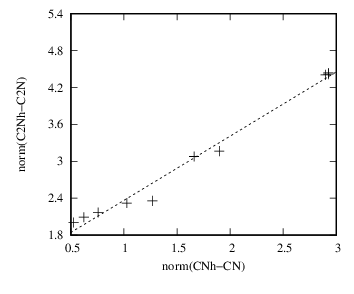}}
\caption{$\|\hat{\mathbf{C}}^\text{II}_{16}-\mathbf{C}^\text{II}_{16}\|_{\mathsf{F}}$ as
a function of $\|\hat{\mathbf{C}}^\text{II}_{8}-\mathbf{C}^\text{II}_{8}\|_{\mathsf{F}}$
for the methods outlined in Table~\ref{tab:scaling_families}.}
\label{fig:C2NError}
\end{figure*}

\begin{table}
\centering
\small
\caption{Linear regression analysis statistics
using least-squares estimator}
\begin{tabular}{c@{\quad}c@{\quad}c@{\quad}c@{\quad}c@{\quad}}\toprule
Method & $\hat{m}$ & $\hat{b}$ & $\chi^2$ & RMSE \\
\midrule
JAM & $0.264$ & $3.833$ & $7.979\cdot 10^{-2}$ & $9.987\cdot 10^{-2}$ \\\midrule
I & $0.426$ & $3.561$ & $3.599\cdot 10^{-2}$ & $6.708 \cdot 10^{-2}$\\\midrule
II & $0.413$ & $3.746$ & $3.177\cdot 10^{-2}$& $6.243 \cdot 10^{-2}$\\\midrule
III & $0.413$ & $3.746$ & $3.177\cdot 10^{-2}$& $6.243 \cdot 10^{-2}$\\\midrule
IV & $0.431$ & $3.555$ & $3.636\cdot 10^{-2}$ & $6.742 \cdot 10^{-2}$\\\midrule
V & $0.562$ & $3.636$ & $5.220\cdot 10^{-2}$ & $8.077 \cdot 10^{-2}$\\\midrule
VI & $1.045$ & $1.319$ & $1.531\cdot 10^{-1}$& $1.383 \cdot 10^{-1}$\\\midrule
VII & $1.045$ & $1.319$ & $1.531\cdot 10^{-1}$& $1.383 \cdot 10^{-1}$\\\bottomrule
\end{tabular}
\label{tab:linear-regression}
\end{table}

The quantity~$\hat{m}$
determines the average
influence of
\mbox{$\|\hat{\mathbf{C}}^\text{II}_{8}-\mathbf{C}^\text{II}_{8}\|_{\mathsf{F}}$}
over~\mbox{$\|\hat{\mathbf{C}}^\text{II}_{16}-\mathbf{C}^\text{II}_{16}\|_{\mathsf{F}}$};
whereas $\hat{b}$
captures
the minimum error due to the scaling method.
No matter how good the approximation~$\hat{\mathbf{C}}^\text{II}_{8}$ is,
the resulting approximation~$\hat{\mathbf{C}}^\text{II}_{16}$
has
an inherent error due to the
approximations~$\hat{\mathbf{B}}_8$ and~$\hat{\mathbf{G}}_{8}$.
This floor error
is equal to the intercept~$\hat{b}$.
The JAM method results
in the lowest~$\hat{m}$
but also in the highest~$\hat{b}$;
whereas the proposed
Method~VI and~VII %
presents
the highest~$\hat{m}$ and lowest~$\hat{b}$.

We can compare two fitted models
by determining the
crossing points of the curves
representing the linear models
and slope of error regression curve~\mbox{$\|\hat{\mathbf{C}}^\text{II}_{16}-\mathbf{C}^\text{II}_{16}\|_{\mathsf{F}}$}
as a function of~\mbox{$\|\hat{\mathbf{C}}^\text{II}_{8}-\mathbf{C}^\text{II}_{8}\|_{\mathsf{F}}$}
for each model.
Thus,
for instance,
the Method~VII provides better approximations
when compared to
the JAM method
if~$\|\hat{\mathbf{C}}^\text{II}_{8}-\mathbf{C}^\text{II}_{8}\| < 3.219$.
Table~\ref{tab:break-point}
shows
the maximum value
of~$\|\hat{\mathbf{C}}^\text{II}_{8}-\mathbf{C}^\text{II}_{8}\|$
for
which
the JAM scaling method is outperformed
by each of the proposed methods.
The 8-point \mbox{DCT-II} approximations
in the literature
present
Frobenius
errors
in the range $[1.72, 2.68]$.
Therefore,
Methods~VI and~VII outperform
JAM method
regardless
of
the considered
8-point \mbox{DCT-II} approximation.
Similar analyses can be applied to larger approximations.

\begin{table}

\centering
\small
\caption{Approximate
$\|\hat{\mathbf{C}}^\text{II}_{8}-\mathbf{C}^\text{II}_{8}\|$ maximum values
for which the proposed methods outperform
the JAM method}

\begin{tabular}{c@{\quad}c@{\quad}c@{\quad}c@{\quad}c@{\quad}c@{\quad}c@{\quad}c@{\quad}}\toprule
Method & I & II & III & IV & V & VI & VII \\\midrule
$\|\hat{\mathbf{C}}^\text{II}_{8}-\mathbf{C}^\text{II}_{8}\|$ & $1.679$ & $0.584$ &  $1.664$ &  $1.664$ &  $0.661$ &  $3.219$ &  $3.219$ \\\bottomrule
\end{tabular}

\label{tab:break-point}
\end{table}

\subsection{Performance Measurements}

We assessed the performance
of the obtained 16-point
DCT approximations
according
to
the following
figures of merit:
the
mean-squared
error~$\operatorname{MSE}(\cdot)$~\cite{Britanak2007, Cham1989, Malvar1992},
total error energy~$\epsilon(\cdot)$~\cite{Britanak2007, Cintra2011},
deviation from
orthogonality~$d(\cdot)$~\cite{Britanak2007, Cintra2012},
unified coding gain~$C_g(\cdot)$~\cite{Britanak2007, Cham1989},
and
transform efficiency~$\eta(\cdot)$~\cite{Britanak2007, Liang2001}.
The total error energy quantifies
the error between matrices
in a euclidean distance way~\cite{Britanak2007, Cham1989, Malvar1992}, while the
mean square error (MSE)
of a given matrix approximation takes into account its proximity to the original
transform and its effect on the autocorrelation matrix of the class of signals in consideration~\cite{Cintra2011, Tablada2015}.
The unified transform coding gain~\cite{Britanak2007, Cham1989} and transform efficiency~\cite{Britanak2007, Liang2001}
provide measures
to quantify the compression capabilities
of a given approximation~\cite{Tablada2015}.
Tables~\ref{tab:metrics_SDCT}--\ref{tab:metrics_T_14_potluri}
show
the obtained results.

The JAM scaling method
is outperformed
by Methods~VI, and~VII
in terms of
total error energy
when considering
the
$\text{BAS}_1$, $\text{BAS}_2$, $\text{BAS}_3$, $\text{BAS}_4$,
SDCT,
LO,
RDCT,
MRDCT,
ABDCT,
and
IMRDCT
as shown in
Tables~\ref{tab:metrics_BAS6}--\ref{tab:metrics_T_14_potluri}.
Methods~I, II and~III
have consistently offered
the same coding performance
when compared with the JAM scaling method
under
all considered scenarios.
\begin{table}
\centering
\small
\caption{Metrics for scaling methods using $\text{BAS}_1$ transform}
\label{tab:metrics_BAS6}
\begin{tabular}{c@{\quad}c@{\quad}c@{\quad}c@{\quad}c@{\quad}c@{\quad}c@{\quad}c@{\quad}}
\\\toprule
Method & $d(\cdot)$ & $\epsilon(\cdot)$ & $\operatorname{MSE}(\cdot)$ & $C_g(\cdot)$ & $\eta(\cdot)$ & $A(\cdot)$ & $S(\cdot)$\\\midrule
JAM & $0.00$ & $14.62$ & $0.14$ & $8.16$ & $70.98$ & 48 & 0\\\midrule
I & $0.00$ & $15.04$ & $0.34$ & $8.16$ & $70.98$ & 48 & 0\\\midrule
II & $0.00$ & $15.79$ & $0.35$ & $8.16$ & $70.98$ & 48 & 0\\\midrule
III & $0.00$ & $15.79$ & $0.35$ & $8.16$ & $70.98$ & 48 & 0\\\midrule
IV & $0.00$ & $15.13$ & $0.36$ & $7.16$ & $57.36$ & 48 & 0\\\midrule
V & $0.00$ & $16.62$ & $0.42$ & $7.16$ & $57.36$ & 48 & 0\\\midrule
VI & $0.00$ & $13.88$ & $0.40$ & $7.16$ & $57.36$ & 48 & 0\\\midrule
VII & $0.00$ & $13.88$ & $0.40$ & $7.16$ & $57.36$ & 48 & 0\\\bottomrule
\end{tabular}
\end{table}

\begin{table}
\centering
\small
\caption{Metrics for scaling methods using $\text{BAS}_2$ transform}
\label{tab:metrics_BAS7}
\begin{tabular}{c@{\quad}c@{\quad}c@{\quad}c@{\quad}c@{\quad}c@{\quad}c@{\quad}c@{\quad}}
\\\toprule
Method & $d(\cdot)$ & $\epsilon(\cdot)$ & $\operatorname{MSE}(\cdot)$ & $C_g(\cdot)$ & $\eta(\cdot)$ & $A(\cdot)$ & $S(\cdot)$\\\midrule
JAM & $0.00$ & $14.58$ & $0.14$ & $8.37$ & $71.83$ & 52 & 4\\\midrule
I & $0.00$ & $15.19$ & $0.35$ & $8.37$ & $71.83$ & 52 & 4\\\midrule
II & $0.00$ & $15.61$ & $0.36$ & $8.37$ & $71.83$ & 52 & 4\\\midrule
III & $0.00$ & $15.61$ & $0.36$ & $8.37$ & $71.83$ & 52 & 4\\\midrule
IV & $0.00$ & $15.23$ & $0.37$ & $7.48$ & $58.83$ & 52 & 4\\\midrule
V & $0.00$ & $16.67$ & $0.44$ & $7.48$ & $58.83$ & 52 & 4\\\midrule
VI & $0.00$ & $13.84$ & $0.42$ & $7.48$ & $58.83$ & 52 & 4\\\midrule
VII & $0.00$ & $13.84$ & $0.42$ & $7.48$ & $58.83$ & 52 & 4\\\bottomrule
\end{tabular}
\end{table}

\begin{table}
\centering
\small
\caption{Metrics for scaling methods using $\text{BAS}_3$ transform}
\label{tab:metrics_BAS5}
\begin{tabular}{c@{\quad}c@{\quad}c@{\quad}c@{\quad}c@{\quad}c@{\quad}c@{\quad}c@{\quad}}
\\\toprule
Method & $d(\cdot)$ & $\epsilon(\cdot)$ & $\operatorname{MSE}(\cdot)$ & $C_g(\cdot)$ & $\eta(\cdot)$ & $A(\cdot)$ & $S(\cdot)$\\\midrule
JAM & $0.00$ & $14.67$ & $0.14$ & $8.16$ & $70.80$ & 52 & 0\\\midrule
I & $0.00$ & $15.36$ & $0.36$ & $8.16$ & $70.80$ & 52 & 0\\\midrule
II & $0.00$ & $15.57$ & $0.37$ & $8.16$ & $70.80$ & 52 & 0\\\midrule
III & $0.00$ & $15.57$ & $0.37$ & $8.16$ & $70.80$ & 52 & 0\\\midrule
IV & $0.00$ & $15.36$ & $0.37$ & $7.41$ & $59.95$ & 52 & 0\\\midrule
V & $0.00$ & $16.70$ & $0.44$ & $7.41$ & $59.95$ & 52 & 0\\\midrule
VI & $0.00$ & $13.94$ & $0.42$ & $7.41$ & $59.95$ & 52 & 0\\\midrule
VII & $0.00$ & $13.94$ & $0.42$ & $7.41$ & $59.95$ & 52 & 0\\\bottomrule
\end{tabular}
\end{table}

\begin{table}
\centering
\small
\caption{Metrics for scaling methods using $\text{BAS}_4$ transform}
\label{tab:metrics_BAS8}
\begin{tabular}{c@{\quad}c@{\quad}c@{\quad}c@{\quad}c@{\quad}c@{\quad}c@{\quad}c@{\quad}}
\\\toprule
Method & $d(\cdot)$ & $\epsilon(\cdot)$ & $\operatorname{MSE}(\cdot)$ & $C_g(\cdot)$ & $\eta(\cdot)$ & $A(\cdot)$ & $S(\cdot)$\\\midrule
JAM & $0.00$ & $13.18$ & $0.13$ & $8.19$ & $70.65$ & 64 & 0\\\midrule
I & $0.00$ & $12.65$ & $0.34$ & $8.19$ & $70.65$ & 64 & 0\\\midrule
II & $0.00$ & $13.18$ & $0.36$ & $8.19$ & $70.65$ & 64 & 0\\\midrule
III & $0.00$ & $13.18$ & $0.36$ & $8.19$ & $70.65$ & 64 & 0\\\midrule
IV & $0.00$ & $12.65$ & $0.34$ & $8.19$ & $70.65$ & 64 & 0\\\midrule
V & $0.00$ & $13.18$ & $0.13$ & $8.19$ & $70.65$ & 64 & 0\\\midrule
VI & $0.00$ & $7.40$ & $0.06$ & $8.19$ & $70.65$ & 64 & 0\\\midrule
VII & $0.00$ & $7.40$ & $0.06$ & $8.19$ & $70.65$ & 64 & 0\\\bottomrule
\end{tabular}
\end{table}

\begin{table}
\centering
\small
\caption{Metrics for scaling methods using RDCT}
\label{tab:metrics_RDCT}
\begin{tabular}{c@{\quad}c@{\quad}c@{\quad}c@{\quad}c@{\quad}c@{\quad}c@{\quad}c@{\quad}}
\\\toprule
Method & $d(\cdot)$ & $\epsilon(\cdot)$ & $\operatorname{MSE}(\cdot)$ & $C_g(\cdot)$ & $\eta(\cdot)$ & $A(\cdot)$ & $S(\cdot)$\\\midrule
JAM & $0.00$ & $12.93$ & $0.12$ & $8.43$ & $72.23$ & 60 & 0\\\midrule
I & $0.00$ & $12.25$ & $0.31$ & $8.43$ & $72.23$ & 60 & 0\\\midrule
II & $0.00$ & $12.82$ & $0.30$ & $8.43$ & $72.23$ & 60 & 0\\\midrule
III & $0.00$ & $12.82$ & $0.30$ & $8.43$ & $72.23$ & 60 & 0\\\midrule
IV & $0.00$ & $12.25$ & $0.34$ & $7.50$ & $59.87$ & 60 & 0\\\midrule
V & $0.00$ & $12.65$ & $0.14$ & $7.50$ & $59.87$ & 60 & 0\\\midrule
VI & $0.00$ & $6.80$ & $0.07$ & $7.50$ & $59.87$ & 60 & 0\\\midrule
VII & $0.00$ & $6.80$ & $0.07$ & $7.50$ & $59.87$ & 60 & 0\\\bottomrule
\end{tabular}
\end{table}

\begin{table}
\centering
\small
\caption{Metrics for scaling methods using MRDCT}
\label{tab:metrics_T_14_bayer}
\begin{tabular}{c@{\quad}c@{\quad}c@{\quad}c@{\quad}c@{\quad}c@{\quad}c@{\quad}c@{\quad}}
\\\toprule
Method & $d(\cdot)$ & $\epsilon(\cdot)$ & $\operatorname{MSE}(\cdot)$ & $C_g(\cdot)$ & $\eta(\cdot)$ & $A(\cdot)$ & $S(\cdot)$\\\midrule
JAM & $0.00$ & $12.77$ & $0.13$ & $7.58$ & $66.07$ & 44 & 0\\\midrule
I & $0.00$ & $13.19$ & $0.34$ & $7.58$ & $66.07$ & 44 & 0\\\midrule
II & $0.00$ & $13.72$ & $0.34$ & $7.58$ & $66.07$ & 44 & 0\\\midrule
III & $0.00$ & $13.72$ & $0.34$ & $7.58$ & $66.07$ & 44 & 0\\\midrule
IV & $0.00$ & $13.19$ & $0.36$ & $6.48$ & $52.20$ & 44 & 0\\\midrule
V & $0.00$ & $14.39$ & $0.25$ & $6.48$ & $52.20$ & 44 & 0\\\midrule
VI & $0.00$ & $9.67$ & $0.18$ & $6.48$ & $52.20$ & 44 & 0\\\midrule
VII & $0.00$ & $9.67$ & $0.18$ & $6.48$ & $52.20$ & 44 & 0\\\bottomrule
\end{tabular}
\end{table}

\begin{table}
\centering
\small
\caption{Metrics for scaling methods using the ABDCT}
\label{tab:metrics_RAZ}
\begin{tabular}{c@{\quad}c@{\quad}c@{\quad}c@{\quad}c@{\quad}c@{\quad}c@{\quad}c@{\quad}}
\\\toprule
Method & $d(\cdot)$ & $\epsilon(\cdot)$ & $\operatorname{MSE}(\cdot)$ & $C_g(\cdot)$ & $\eta(\cdot)$ & $A(\cdot)$ & $S(\cdot)$\\\midrule
JAM & $0.00$ & $12.63$ & $0.12$ & $8.88$ & $76.81$ & 64 & 12\\\midrule
I & $0.00$ & $12.21$ & $0.31$ & $8.88$ & $76.81$ & 64 & 12\\\midrule
II & $0.00$ & $12.75$ & $0.32$ & $8.88$ & $76.81$ & 64 & 12\\\midrule
III & $0.00$ & $12.75$ & $0.32$ & $8.88$ & $76.81$ & 64 & 12\\\midrule
IV & $0.00$ & $12.21$ & $0.34$ & $8.18$ & $63.79$ & 64 & 12\\\midrule
V & $0.00$ & $12.81$ & $0.14$ & $8.18$ & $63.79$ & 64 & 12\\\midrule
VI & $0.00$ & $6.56$ & $0.07$ & $8.18$ & $63.79$ & 64 & 12\\\midrule
VII & $0.00$ & $6.56$ & $0.07$ & $8.18$ & $63.79$ & 64 & 12\\\bottomrule
\end{tabular}
\end{table}

\begin{table}
\centering
\small
\caption{Metrics for scaling methods using SDCT}
\label{tab:metrics_SDCT}
\begin{tabular}{c@{\quad}c@{\quad}c@{\quad}c@{\quad}c@{\quad}c@{\quad}c@{\quad}c@{\quad}}
\\\toprule
Method & $d(\cdot)$ & $\epsilon(\cdot)$ & $\operatorname{MSE}(\cdot)$ & $C_g(\cdot)$ & $\eta(\cdot)$ & $A(\cdot)$ & $S(\cdot)$\\\midrule
JAM & $0.20$ & $12.83$ & $0.13$ & $6.27$ & $68.82$ & 64 & 0\\\midrule
I & $0.20$ & $12.42$ & $0.34$ & $6.27$ & $68.82$ & 64 & 0\\\midrule
II & $0.20$ & $12.96$ & $0.36$ & $6.27$ & $68.82$ & 64 & 0\\\midrule
III & $0.20$ & $12.96$ & $0.36$ & $6.27$ & $68.82$ & 64 & 0\\\midrule
IV & $0.20$ & $12.42$ & $0.38$ & $5.57$ & $58.11$ & 64 & 0\\\midrule
V & $0.20$ & $13.12$ & $0.16$ & $5.57$ & $58.11$ & 64 & 0\\\midrule
VI & $0.20$ & $7.29$ & $0.09$ & $5.57$ & $58.11$ & 64 & 0\\\midrule
VII & $0.20$ & $7.29$ & $0.09$ & $5.57$ & $58.11$ & 64 & 0\\\bottomrule
\end{tabular}
\end{table}

\begin{table}
\centering
\small
\caption{Metrics for scaling methods using the LODCT}
\label{tab:metrics_LO}
\begin{tabular}{c@{\quad}c@{\quad}c@{\quad}c@{\quad}c@{\quad}c@{\quad}c@{\quad}c@{\quad}}
\\\toprule
Method & $d(\cdot)$ & $\epsilon(\cdot)$ & $\operatorname{MSE}(\cdot)$ & $C_g(\cdot)$ & $\eta(\cdot)$ & $A(\cdot)$ & $S(\cdot)$\\\midrule
JAM & $0.00$ & $12.67$ & $0.12$ & $8.64$ & $73.11$ & 64 & 4\\\midrule
I & $0.00$ & $12.15$ & $0.30$ & $8.64$ & $73.11$ & 64 & 4\\\midrule
II & $0.00$ & $12.69$ & $0.31$ & $8.64$ & $73.11$ & 64 & 4\\\midrule
III & $0.00$ & $12.69$ & $0.31$ & $8.64$ & $73.11$ & 64 & 4\\\midrule
IV & $0.00$ & $12.15$ & $0.34$ & $7.83$ & $61.49$ & 64 & 4\\\midrule
V & $0.00$ & $12.68$ & $0.14$ & $7.83$ & $61.49$ & 64 & 4\\\midrule
VI & $0.00$ & $6.30$ & $0.07$ & $7.83$ & $61.49$ & 64 & 4\\\midrule
VII & $0.00$ & $6.30$ & $0.07$ & $7.83$ & $61.49$ & 64 & 4\\\bottomrule
\end{tabular}
\end{table}

\begin{table}
\centering
\small
\caption{Metrics for scaling methods using the IMRDCT
}
\label{tab:metrics_T_14_potluri}
\begin{tabular}{c@{\quad}c@{\quad}c@{\quad}c@{\quad}c@{\quad}c@{\quad}c@{\quad}c@{\quad}}
\\\toprule
Method & $d(\cdot)$ & $\epsilon(\cdot)$ & $\operatorname{MSE}(\cdot)$ & $C_g(\cdot)$ & $\eta(\cdot)$ & $A(\cdot)$ & $S(\cdot)$\\\midrule
JAM & $0.00$ & $13.21$ & $0.15$ & $7.58$ & $66.07$ & 44 & 0\\\midrule
I & $0.00$ & $13.51$ & $0.39$ & $7.58$ & $66.07$ & 44 & 0\\\midrule
II & $0.00$ & $14.03$ & $0.39$ & $7.58$ & $66.07$ & 44 & 0\\\midrule
III & $0.00$ & $14.03$ & $0.39$ & $7.58$ & $66.07$ & 44 & 0\\\midrule
IV & $0.00$ & $13.51$ & $0.37$ & $6.48$ & $52.20$ & 44 & 0\\\midrule
V & $0.00$ & $14.58$ & $0.26$ & $6.48$ & $52.20$ & 44 & 0\\\midrule
VI & $0.00$ & $9.94$ & $0.20$ & $6.48$ & $52.20$ & 44 & 0\\\midrule
VII & $0.00$ & $9.94$ & $0.20$ & $6.48$ & $52.20$ & 44 & 0\\\bottomrule
\end{tabular}
\end{table}

\subsection{Arithmetic Complexity}
The only matrix structures in~\eqref{eq:scalable_general_app}
that contribute to the arithmetic complexity are:
(i)~the two instantiations of~$\mathbf{T}_N$;
(ii)~the butterfly matrix of size~$2N$;
(iii)~the diagonal matrix~$\hat{\mathbf{G}}_N$;
and
(iv)~the matrix~$\hat{\mathbf{B}}_N$.
The matrices
$\hat{\mathbf{G}}_N$
and
$\hat{\mathbf{B}}_N$
requires no multiplication.
Thus,
if
$\mathbf{T}_N$
is selected to be a multiplierless
approximation,
then
the proposed scaling methods
are
ensured to have null multiplicative complexity.
Therefore
the arithmetic complexity
is fully characterized by
the
number of additions
and bit-shifting operations,
which
are
given by:
\begin{align*}
A(\mathbf{T}_{2N}) & = 2A(\mathbf{T}_N)+A(\hat{\mathbf{B}}_N)+A(\hat{\mathbf{G}}_N)+2N
\end{align*}
and
\begin{align*}
S(\mathbf{T}_{2N}) & = 2S(\mathbf{T}_N)+S(\hat{\mathbf{B}}_N)+S(\hat{\mathbf{G}}_N),
\end{align*}
where
functions $A(\cdot)$ and $S(\cdot)$
return the number of additions and bit-shifting operations
required by its arguments,
respectively~\cite{Blahut2010}.
Tables~\ref{tab:metrics_SDCT}--\ref{tab:metrics_T_14_potluri}
shows the arithmetic complexity
for the considered methods.
The proposed scaling methods does not incur
in higher arithmetic complexity when compared to
the JAM scaling method.

\section{Hardware Implementation}

The JAM method and the proposed methods
listed in Table~\ref{tab:scaling_families} were
implemented on
a field programmable gate array (FPGA).
The device used for the hardware implementation was
the Xilinx Artix-7
XC7A35T-1CPG236C.

Because of its high coding performance
(see~Table~\ref{tab:metrics_RAZ}),
we selected the ABDCT~\cite{Oliveira2019}
to be submitted to the discussed methods.
Each scaled transform~$\mathbf{T}_{2N}$
employed
two pipelined instances of the ABDCT core,
as described in Figure~\ref{fig:block_diagram} showing the internal architecture for the resulting transformation matrix~$\mathbf{T}_{2N}$.
For each
of the methods outlined in Table~\ref{tab:scaling_families},
the respective
matrices~$\hat{\mathbf{B}}_N$ and~$\hat{\mathbf{G}}_N$
are generalized permutations~\cite[p.~151]{seber2008matrix}.
Therefore,
the implementation of such matrices
solely requires combinational logic
leading to a reduced overall design latency.
Each sub-block implementing the ABDCT was implemented according
to the fast algorithm outlined in~\cite{Oliveira2019}, where
arithmetic operations were pipelined for achieving
higher maximum operating frequency of each block.
The architecture was implemented using input wordlength of~8~bits.

\begin{figure}

\centering
\includegraphics[scale=1.0]{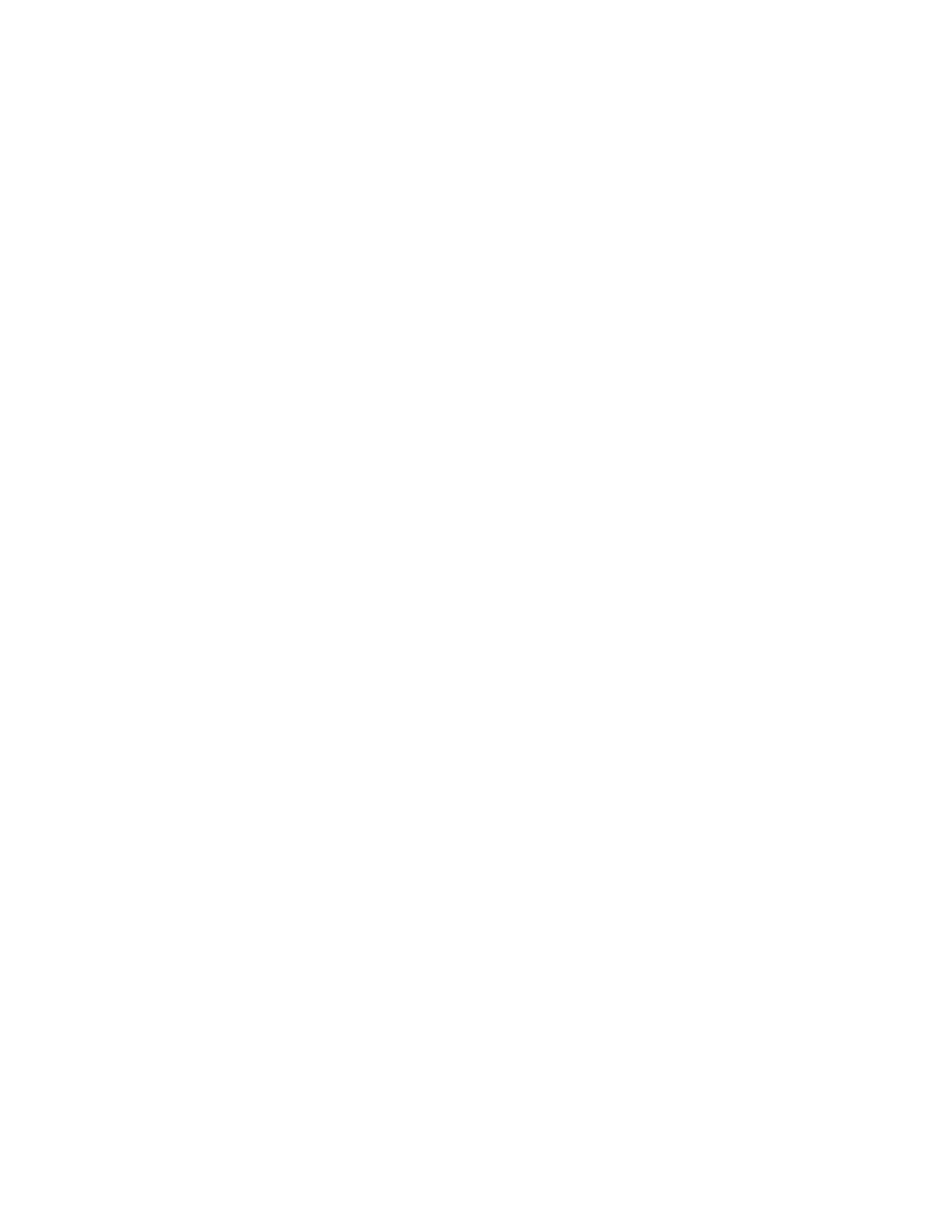}
\caption{Block diagram for the proposed scaling methods for DCT approximation. For the hardware implementation using the ABDCT, the~$\mathbf{T}_{N}$ sub-blocks implement the ABDCT.}
\label{fig:block_diagram}
\end{figure}

The designs were tested employing
the scheme depicted in Figure~\ref{fig:testbed}, together with a controller state-machine and connected to
a
universal asynchronous receiver-transmitter (UART) block.
The UART core interfaces
with the controller state machine
using the ARM Advanced Microcontroller Bus Architecture Advanced eXtensible Interface 4
(AMBA AXI-4) protocol.
\begin{figure}
\centering
\includegraphics[scale=0.7]{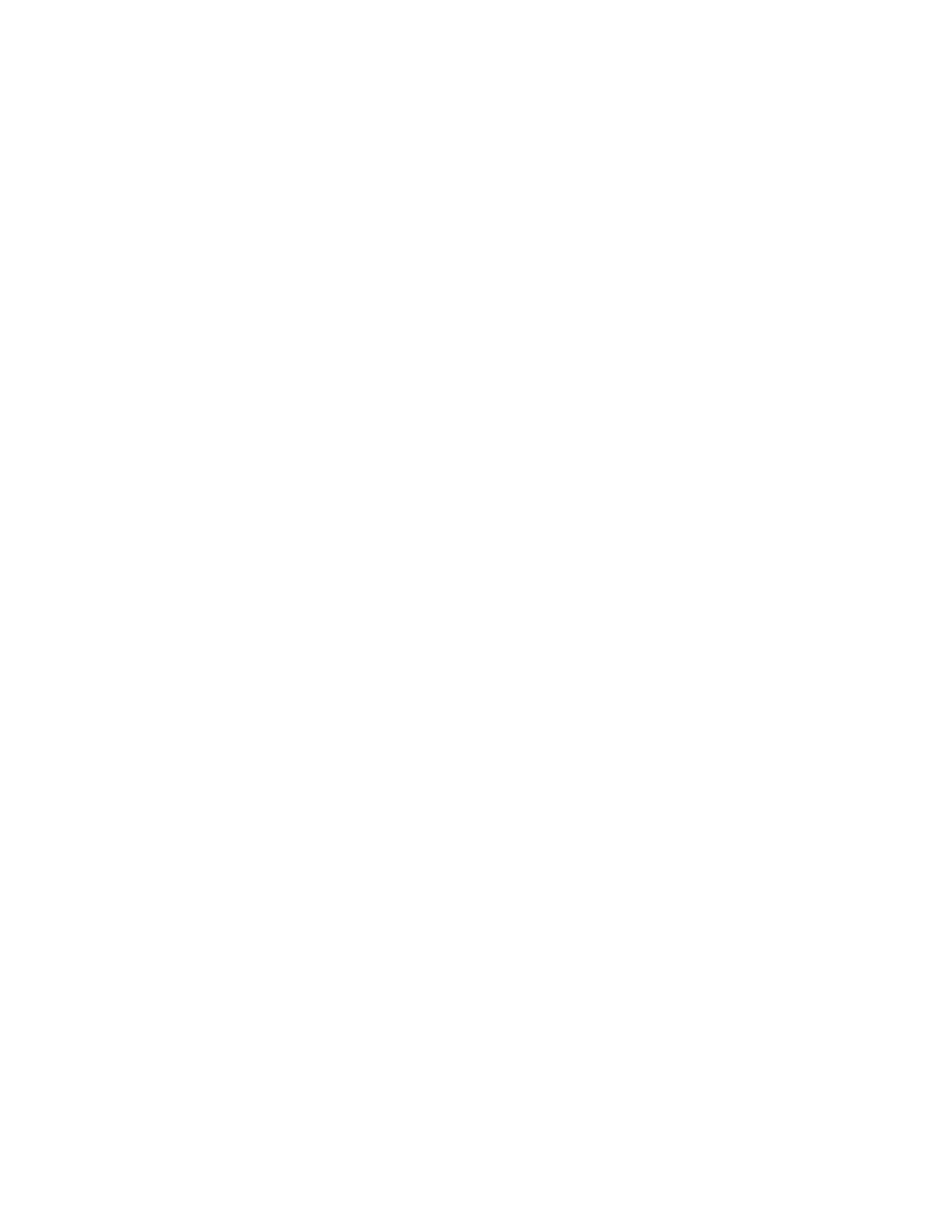}
\caption{Testbed architecture for testing the implemented designs.}
\label{fig:testbed}
\end{figure}
A personal computer (PC) communicates with the controller through the UART by
sending a set of 16~coefficients corresponding to the input submitted to the transform under evaluation.
The 16 coefficients are passed to the design
and processed,
then
the controller state machine
sends the resulting 16~coefficients
back
to the personal computer.
This operation is performed several times and then to the output
of processing the original coefficients with
a software model implemented in Python.

The metrics examined to evaluate the hardware implementations were:
number of occupied slices,
number of look-up tables~(LUT),
flip-flop~(FF) count,
critical path delay~($T_{\text{cpd}}$),
maximum operating frequency $F_{\text{max}} = T_{\text{cpd}}^{-1}$,
and
dynamic power~($D_p$) normalized by~$F_{\text{max}}$.

\begin{table*}
\centering
\caption{FPGA measures of the implemented architectures for scaling of ABDCT}
\label{tab:metrics_hardware}
\begin{tabular}{lrrrrrrrr}
\toprule
\multirow{2}{*}{Metric} & \multicolumn{8}{c}{Method} \\ \cmidrule{2-9}
& JAM & I & II & III & IV & V & VI & VII \\ \midrule
\# Slices			& 224 & 230 & 246 & 265 & 241 & 249 & 251 & 270\\
\# LUT				& 673 & 672 & 724 & 723 & 684 & 684 & 738 & 736\\
\# FF				& 1061& 1061& 1061& 1058& 1061& 1061& 1061& 1058\\
$T_{\text{cpd}}$~(\nano\second)	& 4.558 & 4.202 & 4.407 & 4.485 & 4.606 & 4.694 & 4.897 & 4.620 \\
$F_{\text{max}}$~(\mega\hertz) 	& 219.394 & 237.982 & 226.917 & 222.965 & 217.108 & 213.038 & 204.207 & 216.450 \\
$D_p$~(\micro\watt\per\mega\hertz) & 86.389 & 79.838 & 88.138 & 89.700 & 87.514 & 89.186 & 93.043 & 87.780\\ \bottomrule
\end{tabular}
\end{table*}

Table~\ref{tab:metrics_hardware} shows the hardware metrics for the implemented designs.
In terms of resources consumption, the JAM method requires the least amount of slices and LUTs.
Method VII, however, has the least amount of FFs, while using the largest amount of slices a high number of LUTs.

Method I is the one achieving the lowest critical path delay, followed by Method II, III, and then JAM, respectively.
Because of that, Method I, II, III, and JAM are the ones achieving the highest maximum operating frequency, respectively.
Method I is also the one achieving the most efficient implementation in terms of dynamic power, representing a reduction of 7.58\% when compared to the JAM implementation,
which is the second most efficient implementation.

\section{Conclusions}
\label{sec:conclusions}

We provided an alternative derivation to the
recursive algorithm proposed by Hou~\cite{HsiehHou1987}.
By judiciously approximating
specific matrix factors
of Hou recursive \mbox{DCT-II} factorization,
we
introduced
a framework
for approximate scaling
that
is
capable of
deriving
$2N$-point DCT approximations
based on
$N$-point DCT approximations.
The proposed
collection of scaling DCT approximations
is flexible
and
generates several methods,
encompassing
the
JAM scaling method
as
a particular case.
Conditions for orthogonality---a
common property
in the context of image/data compression---were identified.
An error analysis and statistical modeling
of the scaling methods were derived.
The proposed scaling methods are
inherently multiplierless,
i.e., they do not contribute to any multiplication.
An arithmetic complexity analysis was derived
and
expressions for the additive
and bit-shifting costs were furnished.
The new proposed scaling methods were
able to outperform the competing JAM method in terms of
Frobenius errors and coding gain,
paving the way for promising hardware implementations.

As a topic for future research,
the authors are aware that the work in~\cite{Perera2018b}
proposed a relationship between
the \mbox{DCT-II} and \mbox{DCT-IV} with matrix factors
that possess the highest
sparsity in the literature.
This relation can be used
in connection with~\eqref{eq:s-iv-relation-to-c-iv}
in place of~\eqref{eq:DCTIV-counter-identity},
leading to alternative derivations and
a different family of scaling methods.
Possible research fronts are also the investigation of
how the resulting DCT-II approximations behave after several uses of the recursion in~\eqref{eq:scalable_general_app}
and the use of matrix parameterizations based on the generalization in~\cite{WenjiaYuan2006}, which is derived from a generalization of different five algorithms in the literature~\cite{HsiehHou1987, Kok1997, Lee1984, PeiZongLee1994, Cvetkovic1992}.

\onecolumn

{\small
\singlespacing
\bibliographystyle{ieeetr}
\bibliography{ref}

\begin{thebibliography}{10}

\bibitem{Ahmed1974}
N.~Ahmed, T.~Natarajan, and K.~R. Rao, ``Discrete cosine transform,'' {\em IEEE
  Transactions on Computers}, vol.~C-23, pp.~90--93, Jan. 1974.

\bibitem{Britanak2007}
V.~Britanak, P.~Yip, and K.~R. Rao, {\em Discrete Cosine and Sine Transforms}.
\newblock Academic Press, 2007.

\bibitem{IternationaTelecommunicationUnionITU2000}
``{ITU-T} recommendation {H}.262,'' 2000.

\bibitem{IternationaTelecommunicationUnionITU2005}
``{ITU-T} recommendation {H}.263,'' 2005.

\bibitem{Bhaskaran1997}
V.~Bhaskaran and K.~Konstantinides, {\em Image and Video Compression
  Standards}.
\newblock Kluwer Academic Publishers, June 1997.

\bibitem{Pourazad2012}
M.~T. Pourazad, C.~Doutre, M.~Azimi, and P.~Nasiopoulos, ``{HEVC}: The new gold
  standard for video compression: How does {HEVC} compare with {H.264/AVC}?,''
  {\em {IEEE} Consumer Electronics Magazine}, vol.~1, pp.~36--46, July 2012.

\bibitem{Ochoa-Dominguez2019}
H.~Ochoa-{D}ominguez and K.~{Rao}, {\em Discrete Cosine Transform}.
\newblock USA: CRC Press, 2019.

\bibitem{Puschel2003}
M.~P{\"u}schel and J.~M.~F. Moura, ``The algebraic approach to the discrete
  cosine and sine transforms and their fast algorithms,'' {\em SIAM Journal on
  Computing}, vol.~32, no.~5, pp.~1280--1316, 2003.

\bibitem{Puschel2008}
M.~{P\"uschel} and J.~M.~F. {Moura}, ``Algebraic signal processing theory:
  {C}ooley--{T}ukey type algorithms for {DCTs} and {DSTs},'' {\em IEEE
  Transactions on Signal Processing}, vol.~56, pp.~1502--1521, Apr. 2008.

\bibitem{Veoronenko2009}
Y.~{Voronenko} and M.~{P\"uschel}, ``Algebraic signal processing theory:
  Cooley–tukey type algorithms for real {DFT}s,'' {\em IEEE Transactions on
  Signal Processing}, vol.~57, pp.~205--222, Jan. 2009.

\bibitem{Steidl1991}
G.~Steidl and M.~Tasche, ``A polynomial approach to fast algorithms for
  discrete {F}ourier-cosine and {F}ourier-sine transforms,'' {\em Mathematics
  of Computation}, vol.~56, pp.~181--196, 1991.

\bibitem{Sandryhaila2011}
A.~Sandryhaila, J.~Kova{\v c}evi{\'c}, and M.~P{\"u}schel, ``Algebraic signal
  processing theory: {C}ooley-{T}ukey type algorithms for polynomial transforms
  based on induction,'' {\em {SIAM} Journal on Matrix Analysis and
  Applications}, vol.~32, pp.~364--384, Aug. 2011.

\bibitem{Wang1984}
Z.~Wang, ``Fast algorithms for the discrete {W} transform and for the discrete
  {F}ourier transform,'' {\em IEEE Transactions on Acoustics, Speech, and
  Signal Processing}, vol.~32, pp.~803--816, Aug. 1984.

\bibitem{Wang1985}
{Zhongde Wang}, ``On computing the discrete {F}ourier and cosine transforms,''
  {\em IEEE Transactions on Acoustics, Speech, and Signal Processing}, vol.~33,
  pp.~1341--1344, Oct. 1985.

\bibitem{Yip1980}
P.~{Yip} and K.~{Rao}, ``A fast computational algorithm for the discrete sine
  transform,'' {\em IEEE Transactions on Communications}, vol.~28, no.~2,
  pp.~304--307, 1980.

\bibitem{Plonka2005}
G.~Plonka and M.~Tasche, ``Fast and numerically stable algorithms for discrete
  cosine transforms,'' {\em Linear Algebra and its Applications}, vol.~394,
  pp.~309--345, 2005.

\bibitem{Perera2018a}
S.~M. Perera, ``Signal flow graph approach to efficient and forward stable
  {DST} algorithms,'' {\em Linear Algebra and its Applications}, vol.~542,
  pp.~360--390, 2018.

\bibitem{Perera2018}
S.~M. Perera, A.~Madanayake, N.~Dornback, and N.~Udayanga, ``Design and digital
  implementation of fast and recursive {DCT} {II-IV} algorithms,'' {\em
  Circuits, Systems, and Signal Processing}, vol.~38, July 2018.

\bibitem{Perera2015a}
S.~M. Perera and V.~Olshevsky, ``Fast and stable algorithms for discrete sine
  transformations having orthogonal factors,'' in {\em Interdisciplinary Topics
  in Applied Mathematics, Modeling and Computational Science} (M.~G. Cojocaru,
  I.~S. Kotsireas, R.~N. Makarov, R.~V.~N. Melnik, and H.~Shodiev, eds.),
  (Cham), pp.~347--354, Springer International Publishing, 2015.

\bibitem{Perera2015}
S.~M. Perera, ``Signal processing based on stable radix-2 {DCT} algorithms
  having orthogonal factors,'' {\em Electronic Journal of Linear Algebra},
  vol.~31, Mar. 2015.

\bibitem{Perera2020}
S.~M. Perera and J.~Liu, ``Complexity reduction, self/completely recursive,
  radix-2 {DCT} {I}/{IV} algorithms,'' {\em Journal of Computational and
  Applied Mathematics}, p.~112936, 2020.

\bibitem{Haweel2001}
T.~I. Haweel, ``A new square wave transform based on the {DCT},'' {\em Signal
  Processing}, vol.~81, pp.~2309--2319, Nov. 2001.

\bibitem{Lengwehasatit2004}
K.~Lengwehasatit and A.~Ortega, ``Scalable variable complexity approximate
  forward {DCT},'' {\em IEEE Transactions on Circuits and Systems and Systems
  for Video Technology}, vol.~14, pp.~1236--1248, Nov. 2004.

\bibitem{Bouguezel2008a}
S.~Bouguezel, M.~O. Ahmad, and M.~N.~S. Swamy, ``Low-complexity $8$x$8$
  transform for image compression,'' {\em Electronics Letters}, vol.~44,
  pp.~1249--1250, Oct. 2008.

\bibitem{Cintra2011}
R.~J. Cintra and F.~M. Bayer, ``A {DCT} approximation for image compression,''
  {\em IEEE Signal Processing Letters}, vol.~18, pp.~579--582, Aug. 2011.

\bibitem{Cintra2012}
F.~M. Bayer and R.~J. Cintra, ``{DCT}-like transform for image compression
  requires 14 additions only,'' {\em Electronics Letters}, vol.~48,
  pp.~919--921, July 2012.

\bibitem{Potluri2012}
U.~S. Potluri, A.~Madanayake, R.~J. Cintra, F.~M. Bayer, and N.~Rajapaksha,
  ``Multiplier-free {DCT} approximations for {RF} multi-beam digital
  aperture-array space imaging and directional sensing,'' {\em Measurement
  Science and Technology}, vol.~23, no.~11, p.~114003, 2012.

\bibitem{Wahid2006}
K.~Wahid, V.~Dimitrov, and G.~Jullien, ``New encoding of $8\times 8$ {DCT} to
  make {H.264} lossless,'' in {\em IEEE Asia Pacific Conference on Circuits and
  Systems}, pp.~780--783, 2006.

\bibitem{Liang2001}
{Jie Liang} and T.~D. {Tran}, ``Fast multiplierless approximations of the {DCT}
  with the lifting scheme,'' {\em IEEE Transactions on Signal Processing},
  vol.~49, pp.~3032--3044, Dec. 2001.

\bibitem{Bouguezel2008}
S.~Bouguezel, M.~O. Ahmad, and M.~N.~S. Swamy, ``A multiplication-free
  transform for image compression,'' in {\em 2nd International Conference on
  Signals, Circuits and Systems}, (Monastir, TN), pp.~1--4, 2008.

\bibitem{Bouguezel2009}
S.~Bouguezel, M.~O. Ahmad, and M.~N.~S. Swamy, ``A fast $8\times 8$ transform
  for image compression,'' in {\em International Conference on
  Microelectronics}, pp.~74--77, Dec. 2009.

\bibitem{Bouguezel2010}
S.~Bouguezel, M.~O. Ahmad, and M.~N.~S. Swamy, ``A novel transform for image
  compression,'' in {\em 53rd IEEE International Midwest Symposium on Circuits
  and Systems}, pp.~509--512, Aug. 2010.

\bibitem{Bouguezel2011}
S.~Bouguezel, M.~O. Ahmad, and M.~N.~S. Swamy, ``A low-complexity parametric
  transform for image compression,'' in {\em IEEE International Symposium on
  Circuits and Systems}, (Rio de Janeiro, BR), pp.~2145--2148, May 2011.

\bibitem{Bouguezel2013}
S.~Bouguezel, M.~O. Ahmad, and M.~N.~S. Swamy, ``Binary discrete cosine and
  {H}artley transforms,'' {\em IEEE Transactions on Circuits and Systems I},
  vol.~60, pp.~989--1002, Apr. 2013.

\bibitem{Bayer2011}
F.~M. Bayer and R.~J. Cintra, ``Image compression via a fast {DCT}
  approximation,'' {\em IEEE Latin America Transactions}, vol.~8, pp.~708--713,
  Jan. 2011.

\bibitem{Potluri2014}
U.~S. Potluri, A.~Madanayake, R.~J. Cintra, F.~M. Bayer, S.~Kulasekera, and
  A.~Edirisuriya, ``Improved 8-point approximate {DCT} for image and video
  compression requiring only 14 additions,'' {\em IEEE Transactions on Circuits
  and Systems I}, vol.~61, no.~6, pp.~1727--1740, 2014.

\bibitem{Oliveira2019}
R.~S. Oliveira, R.~J. Cintra, F.~M. Bayer, T.~L.~T. da~Silveira, A.~Madanayake,
  and A.~Leite, ``Low-complexity 8-point {DCT} approximation based on angle
  similarity for image and video coding,'' {\em Multidimensional Systems and
  Signal Processing}, 2019.

\bibitem{Ehrgott2005}
M.~Ehrgott, {\em Multicriteria Optimization}.
\newblock Springer, 2~ed., 2005.

\bibitem{Coelho2018}
D.~F.~G. Coelho, R.~J. Cintra, F.~M. Bayer, S.~Kulasekera, A.~Madanayake,
  P.~Martinez, T.~L.~T. Silveira, R.~S. Oliveira, and V.~S. Dimitrov,
  ``Low-complexity {Loeffler} {DCT} approximations for image and video
  coding,'' {\em Journal of Low Power Electronics and Applications}, vol.~8,
  no.~4, 2018.

\bibitem{Silveira2016}
T.~L.~T. da~Silveira, F.~M. Bayer, R.~J. Cintra, S.~Kulasekera, A.~Madanayake,
  and A.~J. Kozakevicius, ``An orthogonal 16-point approximate {DCT} for image
  and video compression,'' {\em Multidimensional Systems and Signal
  Processing}, vol.~27, pp.~87--104, Jan. 2016.

\bibitem{Jridi2015}
M.~Jridi, A.~Alfalou, and P.~K. Meher, ``A generalized algorithm and
  reconfigurable architecture for efficient and scalable orthogonal
  approximation of {DCT},'' {\em IEEE Transactions on Circuits and Systems I},
  vol.~62, pp.~449--457, Jan. 2015.

\bibitem{bracewell1985hartley}
R.~N. Bracewell, {\em The {H}artley Transform}, vol.~19 of {\em Oxford
  Engineering Science Series}.
\newblock Oxford University Press, 1985.

\bibitem{diaconis1983mathematics}
P.~Diaconis, R.~L. Graham, and W.~M. Kantor, ``The mathematics of perfect
  shuffles,'' {\em Advances in Applied Mathematics}, vol.~4, pp.~175--196,
  1983.

\bibitem{Blahut2010}
R.~E. Blahut, {\em Fast Algorithms for Digital Signal Processing}.
\newblock Cambridge University Press, June 2010.

\bibitem{Wen-HsiengChen1977}
H.~W.-H. Chen, C.~Smith, and S.~Fralick, ``A fast computational algorithm for
  the discrete cosine transform,'' {\em IEEE Transactions on Communications},
  vol.~25, pp.~1004--1009, Sept. 1977.

\bibitem{Britanak2013}
V.~Britanak, ``New generalized conversion method of the {MDCT} to {MDST}
  coefficients in the frequency domain for arbitrary symmetric windowing
  function,'' {\em Digital Signal Processing}, vol.~23, no.~5, pp.~1783--1797,
  2013.

\bibitem{Suehiro1986}
N.~{Suehiro} and M.~{Hatori}, ``Fast algorithms for the {DFT} and other
  sinusoidal transforms,'' {\em IEEE Transactions on Acoustics, Speech, and
  Signal Processing}, vol.~34, pp.~642--644, June 1986.

\bibitem{Hsu2008}
H.-W. Hsu and C.-M. Liu, ``Fast radix-$q$ and mixed-radix algorithms for
  type-{IV} {DCT},'' {\em {IEEE} Signal Processing Letters}, vol.~15,
  pp.~910--913, 2008.

\bibitem{Chan1990}
S.~C. Chan and K.~L. Ho, ``Direct methods for computing discrete sinusoidal
  transforms,'' {\em IEE Proceedings F - Radar and Signal Processing},
  vol.~137, pp.~433--442, Dec. 1990.

\bibitem{Kok1997}
C.~W. Kok, ``Fast algorithm for computing discrete cosine transform,'' {\em
  IEEE Transactions on Signal Processing}, vol.~45, pp.~757--760, Mar. 1997.

\bibitem{Shores2007}
T.~S. Shores, {\em Applied Linear Algebra and Matrix Analysis}.
\newblock Lincoln, NE: Springer, 2007.

\bibitem{HsiehHou1987}
{Hsieh Hou}, ``A fast recursive algorithm for computing the discrete cosine
  transform,'' {\em {IEEE} Transactions on Acoustics, Speech, and Signal
  Processing}, vol.~35, no.~10, pp.~1455--1461, 1987.

\bibitem{seber2008matrix}
G.~A.~F. Seber, {\em A Matrix Handbook for Statisticians}.
\newblock Wiley Series in Probability and Statistics, Hoboken, NJ: John Wiley
  \& Sons, Inc., 2008.

\bibitem{Tablada2015}
C.~J. Tablada, F.~M. Bayer, and R.~J. Cintra, ``A class of {DCT} approximations
  based on the {Feig-Winograd} algorithm,'' {\em Signal Processing}, vol.~113,
  pp.~38--51, Aug. 2015.

\bibitem{Gupta2012}
S.~K. Gupta, J.~Jain, and R.~Pachauri, ``Improved noise cancellation in
  discrete cosine transform domain using adaptive block {LMS} filter,'' {\em
  International Journal of Engineering Science and Advanced Technology},
  vol.~2, pp.~498--502, June 2012.

\bibitem{An2009}
S.~An and C.~Wang, ``A computation structure for {2-D} {DCT} watermarking,'' in
  {\em 52nd IEEE International Midwest Symposium on Circuits and Systems},
  pp.~577--580, Aug. 2009.

\bibitem{Limin2007}
X.~Limin, X.~Weisheng, and Y.~Youling, ``A fast harmonic detection method based
  on recursive {DFT},'' in {\em Electronic Measurement and Instruments, 2007.
  ICEMI '07. 8th International Conference on}, pp.~3--972, 2007.

\bibitem{Zheng2010}
E.~Zheng, Z.~Liu, and L.~Ma, ``Study on harmonic detection method based on
  {FFT} and wavelet transform,'' in {\em 2nd International Conference on Signal
  Processing Systems (ICSPS)}, vol.~3, 2010.

\bibitem{Lombardi2018}
H.~A.~F. {Almurib}, T.~N. {Kumar}, and F.~{Lombardi}, ``Approximate {DCT} image
  compression using inexact computing,'' {\em IEEE Transactions on Computers},
  vol.~67, pp.~149--159, Feb. 2018.

\bibitem{Kay1993}
S.~M. Kay, {\em Fundamentals of Statistical Signal Processing: Estimation
  Theory}, vol.~1.
\newblock Upper Saddle River, NJ, USA: Prentice-Hall, Inc., 1993.

\bibitem{Kay1998}
S.~M. Kay, {\em Fundamentals of Statistical Signal Processing: Detection
  Theory}, vol.~2.
\newblock Upper Saddle River, NJ, USA: Prentice-Hall, Inc., 1998.

\bibitem{casellaberger}
G.~Casella and R.~L. Berger, {\em {S}tatistical {I}nference}, vol.~2.
\newblock Duxbury Pacific Grove, CA, 2002.

\bibitem{Cham1989}
W.-K. {Cham}, ``Development of integer cosine transforms by the principle of
  dyadic symmetry,'' {\em IEE Proceedings I - Communications, Speech and
  Vision}, vol.~136, pp.~276--282, Aug. 1989.

\bibitem{Malvar1992}
H.~S. Malvar, {\em Signal Processing with Lapped Transforms}.
\newblock USA: Artech House, Inc., 1992.

\bibitem{Perera2018b}
S.~M. Perera and J.~Liu, ``Lowest complexity self-recursive radix-2 {DCT}
  {II}/{III} algorithms,'' {\em SIAM Journal on Matrix Analysis and
  Applications}, vol.~39, no.~2, pp.~664--682, 2018.

\bibitem{WenjiaYuan2006}
{Wenjia Yuan}, {Pengwei Hao}, and {Chao Xu}, ``Matrix factorization for fast
  {DCT} algorithms,'' in {\em {IEEE} International Conference on Acoustics
  Speech and Signal Processing Proceedings}, vol.~3, pp.~III--III, 2006.

\bibitem{Lee1984}
B.~Lee, ``A new algorithm to compute the discrete cosine transform,'' {\em IEEE
  Transactions on Acoustics, Speech and Signal Processing}, vol.~32,
  pp.~1243--1245, Dec. 1984.

\bibitem{PeiZongLee1994}
{PeiZong Lee} and {Fang-Yu Huang}, ``Restructured recursive {DCT} and {DST}
  algorithms,'' {\em {IEEE} Transactions on Signal Processing}, vol.~42, no.~7,
  pp.~1600--1609, 1994.

\bibitem{Cvetkovic1992}
Z.~Cvetkovic and M.~V. Popovic, ``New fast recursive algorithms for the
  computation of discrete cosine and sine transforms,'' {\em {IEEE}
  Transactions on Signal Processing}, vol.~40, no.~8, pp.~2083--2086, 1992.

\end{thebibliography}
}

\end{document}